\documentclass{acm_proc_article-sp}
\usepackage{color}
\usepackage{graphicx,caption,subcaption}
\usepackage{amsmath,amssymb,stmaryrd,mathtools}
\usepackage{flushend}
\usepackage{algorithm,algorithmicx}
\usepackage[noend]{algpseudocode}
\usepackage{hyperref}
\usepackage{acronym}

\graphicspath{{./figures/}}

\newtheorem{definition}{Definition}
\newtheorem{problem}{Problem}
\newtheorem{theorem}{Theorem}

\newtheorem{example}{Example}


\newcommand{\ignore}[1]{}

\newcommand{\ie}{i.\,e.,\ }
\newcommand{\etal}{{et~al.}}

\DeclareFontFamily{U}{MnSymbolC}{}
\DeclareSymbolFont{MnSyC}{U}{MnSymbolC}{m}{n}
\DeclareFontShape{U}{MnSymbolC}{m}{n}{
    <-6>  MnSymbolC5
   <6-7>  MnSymbolC6
   <7-8>  MnSymbolC7
   <8-9>  MnSymbolC8
   <9-10> MnSymbolC9
  <10-12> MnSymbolC10
  <12->   MnSymbolC12%
}{}
\DeclareMathSymbol{\powerset}{\mathord}{MnSyC}{180}

\DeclarePairedDelimiter\floor{\lfloor}{\rfloor}

\definecolor{orange}{rgb}{1, .36, .08}

\newcommand{\G}{{\bf G}}
\newcommand{\F}{{\bf F}}
\newcommand{\U}{{\bf U}}

\def\P{{\cal P}} 

\newcommand{\s}{{\bf s}}
\newcommand{\xit}{\xi,t}
\newcommand{\uH}{{\bf u}^H}
\newcommand{\wH}{{\bf w}^H}

\newcommand{\x}{{\bf x}}
\newcommand{\uu}{{\bf u}}

\newcommand{\xego}{x^{\mathrm{ego}}}
\newcommand{\yego}{y^{\mathrm{ego}}}
\newcommand{\vego}{v^{\mathrm{ego}}}

\newcommand{\vadv}{v^{\mathrm{adv}}}

\newcommand{\dotxego}{\dot{x}^{\mathrm{ego}}}
\newcommand{\dotyego}{\dot{y}^{\mathrm{ego}}}
\newcommand{\dotvego}{\dot{v}^{\mathrm{ego}}}

\newcommand{\xegot}{x_t^{\mathrm{ego}}}
\newcommand{\yegot}{y_t^{\mathrm{ego}}}
\newcommand{\vegot}{v_t^{\mathrm{ego}}}
\newcommand{\aegot}{a_t^{\mathrm{ego}}}
\newcommand{\xadvt}{x_t^{\mathrm{adv}}}
\newcommand{\yadvt}{y_t^{\mathrm{adv}}}
\newcommand{\vadvt}{v_t^{\mathrm{adv}}}
\newcommand{\aadvt}{a_t^{\mathrm{adv}}}

\newcommand{\fstl}{f^{\mathrm{stl}}}

\newcommand{\fdyn}{f^{\mathrm{dyn}}}

\newcommand{\phie}{\varphi_e}
\newcommand{\phis}{\varphi_s}


\setlength{\textfloatsep}{1pt plus 1.0pt minus 2.0pt}

\begin{document}

\title{Diagnosis and Repair for Synthesis \\ from Signal Temporal Logic Specifications}

\numberofauthors{1}
\author{
\alignauthor
Shromona Ghosh$^{\S}$\hspace{0.5in} Dorsa Sadigh$^{\S}$\hspace{0.5in} Pierluigi Nuzzo$^{\S}$\\
Vasumathi Raman$^{\dag}$\hspace{0.5in} Alexandre Donz\'{e}$^{\S}$\hspace{0.5in}  Alberto Sangiovanni-Vincentelli$^{\S}$\\
S. Shankar Sastry$^{\S}$\hspace{0.5in} Sanjit A. Seshia$^{\S}$\hspace{0.5in}\\
       \affaddr{$^{\S}$Department of Electrical Engineering and Computer Sciences, University of California, Berkeley, CA}\\
       \affaddr{$^{\dag}$United Technologies Research Center, Berkeley, CA}\\
}

\maketitle

\begin{abstract}
We address the problem of diagnosing and repairing specifications
for hybrid systems formalized in signal temporal logic (STL).
Our focus is on the setting of automatic synthesis of controllers in
a model predictive control (MPC) framework.
We build on recent approaches that reduce the controller
synthesis problem to solving one or more mixed integer linear programs
(MILPs), where infeasibility of a MILP usually indicates unrealizability
of the controller synthesis problem.
Given an infeasible STL synthesis problem, we present algorithms that provide
feedback on the reasons for unrealizability, and suggestions for making
it realizable. 
Our algorithms are sound and complete, \ie they provide a correct
diagnosis, and always terminate with a non-trivial specification that
is feasible using the chosen synthesis method, when such a solution
exists. We demonstrate the effectiveness of our approach on the
synthesis of controllers for various cyber-physical systems, including an
autonomous driving application and an aircraft
electric power system.
\end{abstract}



\section{Introduction}
The automatic synthesis of controllers for hybrid systems from
expressive high-level specification languages allows raising the
level of abstraction for the designer while ensuring correctness of
the resulting controller. In particular, several controller synthesis
methods have been proposed for expressive temporal logics and a
variety of system dynamics. However, a major challenge for the adoption
of these methods in practice is the difficulty of writing 
correctly formal specifications.
Specifications that are poorly
stated, incomplete, or inconsistent can produce
synthesis problems that are unrealizable (no controller exists for the
provided specification), intractable (synthesis is computationally
too hard), or lead to
solutions that fail to capture the designer's intent.
In this paper, we present an algorithmic approach to reduce the
specification burden for controller synthesis from temporal logic
specifications, focusing on the case when the original specification
is unrealizable.

Logical specifications can be provided in multiple ways.
One approach is to provide {\em monolithic} specifications, combining
within a single formula constraints on the environment with desired properties of
the system under control.
In many cases, a system specification can be conveniently provided as
a contract to emphasize what are the responsibilities of the system
under control (guarantees) versus the assumptions on the external, possibly
adversarial, environment~\cite{Nuzzo14,Nuzzo15b}. In such a scenario, besides
\emph{``weakening'' the guarantees},
realizability of a controller can also be achieved by
\emph{``tightening'' the assumptions}. 
Indeed, when the specification is unrealizable, it could be
either because the environment assumptions are too weak, or the
requirements are too strong, or a combination of both.
Finding the ``problem'' with the specification manually can be
a tedious and time-consuming process, nullifying the benefits of
automatic synthesis. Further, in the
{\em reactive} setting, when the environment is adversarial, finding the right assumptions a priori can be difficult.
Thus, given an unrealizable logical specification, there is a need for tools that
localize the cause of unrealizability to (hopefully small) parts of
the formula, and provide suggestions for repairing the formula in an
``optimal'' manner.

The problem of diagnosing and repairing formal requirements
has received its share of attention in the formal methods
community. Ferr{\`{e}}re~\etal~perform diagnosis on faulty executions
of systems with specifications expressed in linear temporal logic
(LTL) and Metric Temporal Logic (MTL)~\cite{traceDiagnostics}. They
identify the cause of unsatisfiability of these properties in the form of
prime implicants, which are conjunctions of literals, and map the failure of a
specification to the failure of these prime implicants. Similar syntax tree based
definitions of unsatisfiable cores for LTL
were presented in~\cite{Schuppan09}.
In the context of synthesis from LTL,
Raman~\etal~\cite{RamanKG13} address the problem of
categorizing the causes of unrealizability, and how to detect them in
high-level robot control specifications.
The use of counter-strategies to derive new environment
assumptions for synthesis has also been much studied over the past few
years~\cite{mining,AMT-fmcad13,li2014synthesis}.
Our approach, based on exploiting information from optimization
solvers, is similar to that taken by
Nuzzo~\etal~\cite{nuzzo-fmcad10} to extract
unsatisfiable cores for satisfiability
modulo theories (SMT) solving.

In this paper, we address the problem of diagnosing and repairing
specifications formalized in \emph{signal
temporal logic (STL)}~\cite{Maler2004}, a specification language
that is well-suited for hybrid systems.
Our work is conducted in the setting of automated synthesis from STL
using optimization methods in a model predictive control (MPC)
framework~\cite{Raman14,Raman15}.
In this approach to synthesis,
both the system dynamics and the STL requirements 
on the system are
encoded as mixed integer linear constraints on variables modeling the
dynamics of the system and its environment.
Controller synthesis is then formulated as an optimization problem
to be solved subject to these constraints~\cite{Raman14}. In the
reactive setting, this approach proceeds by iteratively solving a
combination of optimization problems using a
{\em counterexample-guided inductive synthesis} (CEGIS) scheme~\cite{Raman15}.
In this context, an unrealizable STL specification leads to an infeasible
optimization problem.
We leverage the ability of existing mixed integer linear
programming (MILP) solvers to localize the cause of infeasibility
to so-called {\em irreducibly inconsistent systems} (IIS).
Our algorithms use the IIS to localize the cause of unrealizability to the relevant
parts of the STL specification. Additionally, we give a method for
generating a {\em minimal set of repairs} to the STL specification
such that, after applying those repairs, the resulting specification is
realizable. The set of repairs is drawn from a suitably defined space
that ensures that we rule out vacuous and other unreasonable
adjustments to the specification. Specifically, in this paper, we focus on the numerical parameters in a formula since their specification is often the most tedious and error-prone part. 
Our algorithms are sound and complete, \ie they provide a correct
diagnosis, and always terminate with a reasonable
specification that
is realizable using the chosen synthesis method, when such a repair
exists in the space of possible repairs.

The problem of infeasibility in constrained predictive control schemes has also been widely addressed in the literature, 
e.g., by adopting robust MPC approaches, soft constraints, and penalty functions~\cite{kerrigan2000soft,scokaert1999feasibility,bemporad1999robust}. Rather than tackling general infeasibility issues in MPC, our focus is on providing tools to help debug the controller specification at design time. However, the deployment of robust or soft-constrained MPC approaches can also benefit from our techniques. 
Our use of MILP does not restrict our method to linear dynamical
systems; indeed, we can handle constrained linear and
piecewise affine systems, mixed logical dynamical (MLD)
systems~\cite{BemporadM99}, and certain differentially flat systems.
We demonstrate the effectiveness of our approach on the
synthesis of controllers for a number of cyber-physical systems, including an
autonomous driving application and an aircraft
electric power system.

\ignore{
Our approach generalizes to a variety of high-dimensional systems with timed
temporal logic specifications, including constrained linear and
piecewise affine systems, mixed logical dynamic (MLD)
systems~\cite{BemporadM99}, and certain differentially flat systems
such as quadrotors and car-like vehicles. They also generalize to
other categories of nonlinear systems and specifications, the
expressivity being only limited by the underlying solver.
}
The paper is organized as follows. We begin in Sec.~\ref{sec:Prelim} and~\ref{sec:overview} with preliminaries and a running example.
We formally define the diagnosis and repair problems in Sec.~\ref{sec:probdef}
and describe our algorithms for both monolithic and contract specifications in Sec.~\ref{sec:diagnosis}
and~\ref{sec:contracts}.  In Sec.~\ref{sec:examples} we illustrate our approach on the case studies, and finally conclude in Sec.~\ref{sec:conclusions}.



\section{Preliminaries}
\label{sec:Prelim}
In this section, we introduce preliminaries on hybrid dynamical systems, the specification language \emph{Signal Temporal Logic}, and the \emph{Model Predictive Control} framework.
\subsection{Hybrid Dynamical Systems}
We consider a continuous-time hybrid dynamical system:
\begin{equation} \label{eq:csys}
\begin{aligned}
	&\dot{x_t} = f(x_t,u_t,w_t)\\
	&y_t = g(x_t,u_t,w_t),
\end{aligned}
\end{equation}
%
where $x_t \in \mathcal{X} \subseteq (\mathbb{R}^{n_c} \times \{0,1\}^{n_l})$ represent the hybrid (continuous and logical) states at time $t$, $u_t \in \mathcal{U} \subseteq (\mathbb{R}^{m_c} \times \{0,1\}^{m_l})$ are the hybrid control inputs, $y_t \in \mathcal{Y} \subseteq (\mathbb{R}^{p_c} \times \{0,1\}^{p_l})$ are the outputs, and $w_t \in \mathcal{W} \subseteq (\mathbb{R}^{e_c} \times \{0,1\}^{e_l})$ are the hybrid external inputs, including disturbances and other adversarial inputs from the environment.
Using a sampling period $\Delta t > 0$, the continuous-time system~\eqref{eq:csys} lends itself to the following discrete-time approximation:
\begin{equation}
\begin{aligned}
\label{eq:dynamics}
	&x_{k+1} = f_d(x_k,u_k,w_k)\\
	&y_k = g_d(x_k,u_k,w_k),
\end{aligned}
\end{equation}
where states and outputs evolve according to time steps $k \in \mathbb{N}$, where $x_k = x(\floor{t/\Delta t}) \in \mathcal{X}$.
Given that the system starts at an initial state $x_0 \in \mathcal{X}$, a \emph{run} of the system can be expressed as:
\begin{equation}
\xi = (x_0, y_0, u_0, w_0),(x_1, y_1, u_1, w_1),(x_2, y_2, u_2, w_2), \dots	
\end{equation}
\ie as a sequence of assignments over the system variables $V = (x,y, u, w)$. A run is, therefore, a \emph{discrete-time signal}. We denote $\xi_k= (x_k, y_k, u_k, w_k)$.

Given an initial state $x_0$, a finite horizon input sequence ${\bf u}^H = u_0,u_1,\dotsc,u_{H-1}$, and a finite horizon environment sequence ${\bf w}^H = w_0,w_1,\dotsc, w_{H-1}$, the finite horizon run of the system modeled by the system dynamics in~\eqref{eq:dynamics} is uniquely expressed as:
\begin{equation}
\begin{aligned}
&\xi^H(x_0, {\bf u}^H, {\bf w}^H) =& \\
&(x_0, y_0, u_0, w_0), \dotsc,(x_{H-1},y_{H-1},u_{H-1},w_{H-1}),&
\end{aligned}	
\end{equation}
where $x_1, \ldots, x_{H-1}$, $y_0, \ldots, y_{H-1}$ are computed using~\eqref{eq:dynamics}.
We finally define a finite-horizon cost function $J(\xi^H)$, mapping $H$-horizon trajectories $\xi^H \in \Xi$ to costs in $\mathbb{R}^+$.

\subsection{Signal Temporal Logic}

\emph{Signal Temporal Logic} (STL) was first introduced as an extension of \emph{Metric Interval Temporal Logic (MITL)} to reason about the behavior of real-valued dense-time signals~\cite{Maler2004}. STL has been largely applied to specify and monitor real-time properties of hybrid systems~\cite{donze2012temporal}.
Moreover, it offers a robust, quantitative interpretation for the satisfaction of a temporal formula~\cite{DonzeM10,donze2013efficient}, as further detailed below.

An STL formula $\varphi$ is evaluated on a signal $\xi$ at some time $t$. We say $(\xi,t) \models \varphi$ when $\varphi$  evaluates to true for $\xi$ at time $t$. We instead write $\xi \models \varphi$, if $\xi$ satifies $\varphi$ at time $0$. The atomic predicates of STL are defined by inequalities of the form $\mu(\xi (t))>0$, where $\mu$ is some function of signal $\xi$ at time $t$. Specifically, $\mu$ is used to denote both the function of $\xi(t)$ and the predicate. Any STL formula $\varphi$ consists of Boolean and temporal operations on such predicates. The syntax of STL formulae is defined recursively as follows:
\begin{equation}
\varphi ::= \mu \:|\: \neg \mu \:|\: \varphi \wedge \psi \:|\:\G_{[a,b]} \psi \:|\: \F_{[a,b]} \psi \:|\: \varphi \: \U_{[a,b]} \psi,
\end{equation}
where $\psi$ and $\varphi$ are STL formulae, $\G$ is the \emph{globally} operator, $\F$ is the \emph{finally} operator and $\U$ is the \emph{until} operator. Intuitively, $\xi \models \G_{[a,b]} \psi$ specifies that $\psi$ must hold for signal $\xi$ at all times of the given interval, \ie $ t\in [a,b]$. Similarly $\xi \models \F_{[a,b]} \psi$ specifies that $\psi$ must hold at some time $t'$ of the given interval.
Finally, $\xi \models \varphi \:\U_{[a,b]} \psi$  specifies that $\varphi$ must hold starting from the current time until a specific time $t \in [a,b]$ at which $\psi$ becomes true. Formally, the satisfaction of a formula $\varphi$ for a signal $\xi$ at time $t$ is defined as:
%
\begin{equation}
\label{eq:STL}
\begin{array}{lll}
(\xit) \models \mu &\Leftrightarrow & \mu(\xi (t)) > 0\\
(\xit) \models \neg \mu &\Leftrightarrow & \neg((\xit) \models \mu)\\
(\xit) \models \varphi \land \psi &\Leftrightarrow & (\xit) \models \varphi \land (\xit) \models \psi\\
(\xit) \models \F_{[a,b]} \varphi &\Leftrightarrow & \exists t'\in [t+a, t+b], (\xi, t') \models \varphi\\
(\xit) \models \G_{[a,b]} \varphi &\Leftrightarrow & \forall t'\in [t+a, t+b], (\xi,t') \models \varphi\\
(\xit) \models \varphi~\U_{[a,b]}~\psi &\Leftrightarrow & \exists t' \in [t+a,t+b] \mbox{ s.t. } (\xi, t') \models \psi \\
&&\land \forall t'' \in [t,t'], (\xi,t'') \models \varphi.
\end{array}
\end{equation}
%
The \emph{bound} of an STL formula is defined as the maximum over the sums of all nested upper bounds on the temporal operators of the STL formula. For instance, given $\psi = \G_{[0,20]} \F_{[1,6]} \varphi_1 \wedge \F_{[2,25]} \varphi_2$, the \emph{bound} can be calculated as $N = \max (6+20, 25) = 26$. An STL formula $\varphi$ is \emph{bounded-time} if it contains no unbounded operators.

\paragraph{Robust Satisfaction}
\label{sec:RobSat}
A \emph{quantitative} or \emph{robust semantics} is defined for an STL formula $\varphi$ by associating it with a real-valued function $\rho^\varphi$ of the signal $\xi$ and time $t$, which
provides a ``measure'' of the margin by which $\varphi$ is satisfied. Specifically, we require
$(\xit) \models \varphi$ if and only if $\rho^\varphi(\xit) >0$.
The magnitude of $\rho^\varphi(\xit)$ can then be interpreted as an estimate of the ``distance'' of
a signal $\xi$ from the set of trajectories satisfying or violating $\varphi$.

Formally, the quantitative semantics is defined as follows:
%
\begin{equation}
\label{eq:robust_STL}	
\begin{array}{lll}
\rho^\mu(\xit)&=& \mu(\xi(t))\\
\rho^{\neg \mu}(\xit)&=&  -\mu(\xi(t)) \\
\rho^{\varphi \land \psi}(\xit)&=& \min (\rho^{\varphi}(\xit),\rho^{\psi}(\xit)    )\\
\rho^{\G_{[a,b]} \varphi}(\xit) &=& \min_{t'\in [t+a, t+b]}\rho^{\varphi}(\xi,t')\\
\rho^{\F_{[a,b]} \varphi}(\xit) &=& \max_{t'\in [t+a, t+b]}\rho^{\varphi}(\xi,t')\\
\rho^{\varphi{\U}_{[a,b]}\psi}(\xit)&=&  \max_{t'\in [t+a, t+b]} ( \min (\rho^{\psi}(\xi,t'),    \\
&&~~~~~~~~~~~~~~\min_{t'' \in [t,t']} \rho^{\varphi}(\xi,t'')).
\end{array}
\end{equation}
%
Using the definitions above, the robustness value can then be computed recursively for any STL formula.

\subsection{Model Predictive Control}

\sloppypar
\emph{Model Predictive Control} (MPC), or \emph{Receding Horizon Control} (RHC), is a well studied hybrid system control method~\cite{morari1993model,garcia1989}. In RHC, at any time step, the state of the system is observed and an optimization is solved over a finite time horizon $H$, given a set of constraints and a cost function $J$. 
When $f$, as defined in~\eqref{eq:dynamics}, is nonlinear, we assume optimization is performed at each MPC step after locally linearizing the system dynamics. 
For example, at time $t=k$, the linearized dynamics around the current state and time 
are used to compute an optimal strategy $\uH_*$ over the time interval $[k, k+H-1]$. 
Only the first component of $\uH_*$ is, however, applied to the system, while a similar optimization problem is solved at time $k+1$ to compute a new optimal control sequence along the interval $[k+1, k+H]$  for the model linearized around $t=k+1$. 
While the global optimality of MPC is not guaranteed, the technique is frequently used and performs well in practice.

In this paper, we use STL to express temporal constraints on the environment and system runs for MPC. We then translate an STL specification into a set of mixed integer linear constraints, as further detailed below~\cite{Raman14,Raman15}. 
%
Given a formula $\varphi$ to be satisfied over a finite horizon $H$, the associated optimization problem has the form:
\begin{equation}
\label{eq:opt_simple}
\begin{aligned}
	 &\underset{\uH}{\text{minimize}}
	& \quad J(\xi^H(x_0, \uH))\\
	 &\text{subject to}
	& \quad \xi^H(x_0,\uH) \models \varphi,
\end{aligned}	
\end{equation}
which extracts a control strategy $\uH$ that minimizes the cost function $J(\xi^H)$ over the finite-horizon trajectory $\xi^H$, while satisfying the STL formula $\varphi$ at time step $0$.
In a closed-loop setting, we compute a fresh $\uH$ at every time step $i \in \mathbb{N}$, replacing $x_0$ with $x_i$ in~\eqref{eq:opt_simple}~\cite{Raman14,Raman15}. 

While~\eqref{eq:opt_simple} applies to systems without uncontrolled inputs, a more general formulation can be provided to account for an uncontrolled disturbance input $\wH$ that can act, in general, adversarially. To provide this formulation, we assume that the specification is given in the form of an STL \emph{assume-guarantee (A/G) contract}~\cite{Nuzzo14,Nuzzo15b} $\mathcal{C} = (V,  \varphi_e, \varphi \equiv \varphi_e \rightarrow \varphi_s)$, where $V$ is the set of variables,
$\varphi_e$ captures the assumptions (admitted behaviors) over the (uncontrolled) environment inputs $w$, and $\varphi_s$ describes the guarantees (promised behaviors) over all the system variables. A game-theoretic formulation of the controller synthesis problem can then be represented as a \emph{minimax} optimization problem:
\begin{equation}
\label{eq:opt}
\begin{aligned}
	& \underset{\uH}{\text{minimize}}
	& \underset{\wH \in \mathcal{W}^e}{\text{maximize}}
	& & J(\xi^H(x_0, \uH , \wH))\\
	& \text{subject to}
	& \forall\wH \in \mathcal{W}^e
	& & \xi^H(x_0,\uH,\wH) \models \varphi, \\
\end{aligned}	
\end{equation}
where we aim to find a strategy $\uH$ that minimizes the worst case cost $J(\xi^H)$ over the finite horizon trajectory, under the assumption that the disturbance signal $\wH$ acts adversarially. We use $ \mathcal{W}^e$ in~\eqref{eq:opt} to denote the
set of disturbances that satisfy the environment specification $\varphi_e$, i.e., $\mathcal{W}^e = \{{\bf w} \in \mathcal{W}^H | {\bf w} \models \varphi_e\}$.

\paragraph{Mixed Integer Linear Program Formulation}
\label{sec:milp}
To solve the control problems in~\eqref{eq:opt_simple} and~\eqref{eq:opt} 
the STL formula $\varphi$ can be translated into a
set of mixed integer constraints, thus reducing the optimization
problem to a \emph{Mixed Integer Program} (MIP). Specifically, in this paper, we consider control
problems that can be encoded as 
\emph{Mixed Integer Linear Programs} (MILP). 

The MILP constraints are constructed recursively on the structure of the STL specification,
and express the robust satisfaction value of the formula. We see from
Section~\ref{sec:RobSat} that the robustness value of formulae with
temporal and Boolean operators is expressed as the \emph{min} or
\emph{max} of the robustness values of the operands over time. 
We then demonstrate the encoding of the \emph{min} operator. Given
$min(\rho^{\varphi_1}, \hdots, \rho^{\varphi_n})$, we introduce 
Boolean variables $z^{\varphi_i}$ for $i \in \{1, \hdots, n\}$ and a
continuous variable $p$. The resulting MILP constraints are: 
\begin{equation} \label{eq:bigM}
\begin{aligned}
	p &\leq \rho^{\varphi_i}, \sum_{i=1\hdots n} z^{\varphi_i}\geq 1 \\
	\rho^{\varphi_i} - (1 - z^{\varphi_i}) M &\leq p \leq \rho^{\varphi_i} + (1 -
   z^{\varphi_i})M
\end{aligned}	
\end{equation}
where $M$ is a constant selected to be much larger than $|\rho^{\varphi_i}|$ for all $i$,   
 and $i \in \{1, \hdots, n\}$. The above constraints ensure that $p$ takes the value of the minimum robustness and $z^{\varphi_i} = 1$ if $\rho^{\varphi_i}$ is the minimum. To get the constraints for \emph{max}, we replace $\leq$ by $\geq$ in~\eqref{eq:bigM}.

%
%

We solve the MILP with an off-the-shelf solver. If the receding horizon scheme is feasible, then 
the controller synthesis problem is {\em realizable}, i.e., the 
algorithm returns a controller that satisfies the specification and optimizes the objective.
However, if the MILP is infeasible,
the synthesis problem is {\em unrealizable}. In this case,
the failure to synthesize a controller may well be attributed to just a portion of the STL
specification. 
In the rest of the paper we discuss how infeasibility of the MILP constraints can be used to infer the ``cause'' of failure and, consequently,  diagnose and repair the original STL specification.


\section{A Running Example}
\label{sec:overview}
\begin{figure}
\centering
\includegraphics[width=0.35\textwidth]{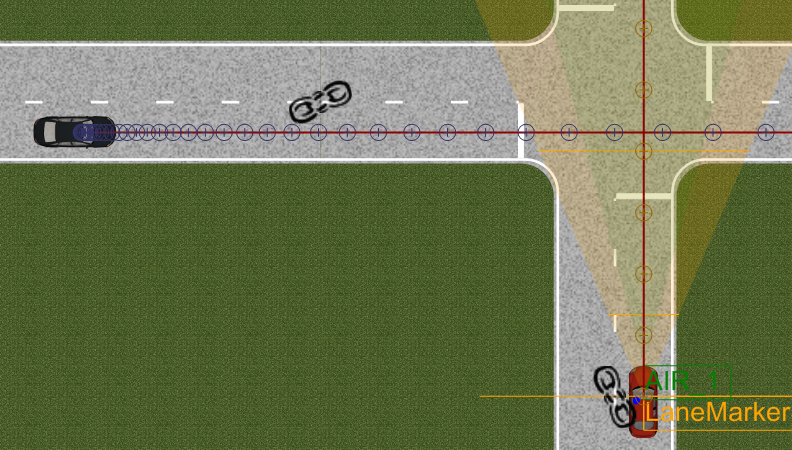}	
\caption{\small{Vehicles crossing an intersection. The red car is the \emph{ego} vehicle, while the black car is part of the environment.}}
\label{fig:intersection}
\end{figure}

To illustrate our approach, we introduce a running example from the autonomous driving domain.
As shown in Fig.~\ref{fig:intersection}, we consider a scenario in which two moving vehicles
approach an intersection.  The red car, labeled the \emph{ego} vehicle, is the vehicle under
control, while the black car is part of the external environment and may behave, in general,
adversarially. The state of the system includes the position and velocity of each vehicle, the
control input is the acceleration of the \emph{ego} vehicle, and the environment input is the acceleration of the other vehicle, i.e.,
\begin{equation}
\begin{aligned}
	 \tilde{x}_t &= (\xegot, \yegot, \vegot, \xadvt, \yadvt, \vadvt)  \\
	 u_t & =  \aegot \quad w_t=\aadvt.
	\end{aligned}	
\end{equation}
We also assume the dynamics of the system is given by a simple double integrator for each vehicle, e.g.,
\begin{equation}
\begin{bmatrix}
	\dotxego \\ \dotyego \\ \dotvego
\end{bmatrix}
 = \begin{bmatrix}
 	0 & 0 & 0\\
 	0 & 0 & 1\\
 	0 & 0 & 0
 \end{bmatrix}
\begin{bmatrix}
	\xego \\ \yego \\ \vego
\end{bmatrix}
+
\begin{bmatrix}
	0 \\ 0 \\ 1
\end{bmatrix}
u.	
\end{equation}
A similar equation holds for the environment vehicle which is, however, constrained to move along
the horizontal axis. We assume the \emph{ego} vehicle is initialized at the coordinates $(0,-1)$ and
the other vehicle is initialized at $(-1,0)$.
We further assume all the units in this example
follow the metric system. We would like
to design a controller for the \emph{ego} vehicle to satisfy an STL specification under some
assumptions on the external environment, and provide diagnosis and feedback if the specification is
infeasible. We discuss the following three scenarios.

\begin{example}[Collision Avoidance]
\label{sec:example1}
\begin{sloppypar}
We want to avoid a collision between the \emph{ego} and the adversary vehicle. In this example, we assume the environment vehicle's acceleration is fixed at all times, i.e., $\aadvt = 2$, while the initial velocities are $v_{0}^{\mathrm{adv}} = 0$ and $v_{0}^{\mathrm{ego}} = 0$.
We encode our requirements using the formula
$\varphi := \varphi_1 \wedge \varphi_2,$ where $\varphi_1$ and $\varphi_2$ are defined as follows:
\begin{equation}
\begin{aligned}
	&\varphi_1 = \G_{[0,\infty)} \neg\big((-0.5 \leq \yegot \leq 0.5) \wedge (-0.5 \leq \xadvt \leq 0.5) \big) ,\\
	&\varphi_2 = \G_{[0,\infty)} \big(1.5 \leq \aegot \leq 2.5 \big).
\end{aligned}	
\end{equation}
We prescribe bounds on the system acceleration, and state that both cars should never be confined together within a box  of width $1$ around the intersection $(0,0)$ to avoid a collision.
\end{sloppypar}
\end{example}

\begin{example}[Non-adversarial Race]
\label{sec:example2}
We discuss a race scenario, in which the \emph{ego} vehicle must increase its velocity to exceed $0.5$ whenever the adversary's initial velocity exceeds $0.5$.  We then formalize our requirement as a contract ($\psi_e$, $\psi_e \rightarrow \psi_s$), where $\psi_e$ are the assumptions made on the environment and $\psi_s$ are the guarantees of the system provided the environment satisfies the assumptions. Specifically:
\begin{equation}
\begin{aligned}
	&\psi_e = ( v_0^{\mathrm{adv}} \geq 0.5 ),\\
	&\psi_s = \G_{[0,\infty)} ( -1 \leq \aegot \leq 1) \land (\vegot \geq 0.5).\\
\end{aligned}	
\end{equation}
The initial velocities are $v_{0}^{\mathrm{adv}} = 0.55$ and $v_{0}^{\mathrm{ego}} = 0$, while the environment vehicle's acceleration is $\aadvt = 1$ at all times. We also require the acceleration to be bounded by $1$.
\end{example}

\begin{example}[Adversarial Race]
\label{sec:example3}
\sloppypar
We discuss another race scenario, in which the environment vehicle acceleration $\aadvt$ is no longer fixed, but can vary up to a maximum value of $2$. Initially, $v_{0}^{\mathrm{adv}} = 0$ and $v_{0}^{\mathrm{ego}} = 0$ hold. Under these assumptions, we would like to guarantee that the velocity of the \emph{ego} vehicle exceeds $0.5$ if the speed of the adversary vehicle exceeds $0.5$, while maintaining an acceleration in the $[-1, 1]$ range.
Altogether, we capture the requirements above via a contract ($\phi_w$, $\phi_w \rightarrow \phi_s$), where:
\begin{equation}
\begin{aligned}
	&\phi_w = \G_{[0,\infty)} \big( 0 \leq \aadvt \leq 2 \big) ,\\
	&\phi_s =  \G_{[0, \infty)} \big( (\vadvt > 0.5) \rightarrow (\vegot > 0.5) \big) \land \big( |\aegot | \leq 1 \big).
\end{aligned}	
\end{equation}
\end{example}

\section{Problem Statement}
\label{sec:probdef}
In this section, we define the problems of specification diagnosis and repair in the context of controller synthesis from STL. 
We assume the discretized system dynamics $f_d$ and $g_d$, the initial state $x_0$, the STL specification $\varphi$, and a cost function $J$ are given. Then, the \emph{controller synthesis} problem denoted as $\P = (f_d, g_d, x_0,\varphi, J)$ translates into solving~\eqref{eq:opt_simple} (when $\varphi$ is a monolithic specification of the desired system behaviors) or~\eqref{eq:opt} (when $\varphi$ represents a contract between the system and the environment). 

If synthesis fails, the \emph{diagnosis} problem is, intuitively, to return an explanation in the form of a subset of the original problem constraints that are already infeasible when taken alone. 
The \emph{repair} problem is to return a ``minimal'' set of changes to the specification
that would render the resulting controller synthesis problem feasible. 
To diagnose and repair an STL formula, we focus on its sets of atomic predicates and time intervals of the temporal operators. We then start by providing a definition of the \emph{support} of its atomic predicates, i.e., the set of times at which the value of a predicate affects satisfiability of the formula, and a notion for the set of repairs that we allow. 
%
\begin{definition}[Support]
The \emph{support} of a predicate $\mu$ in an STL formula $\varphi$ is the set of times $t$ such that $\mu(\xi (t))$ appears in $\varphi$.
\end{definition}
For example, given $\varphi = \G_{[6,10]}(x_t >0.2)$, the support of predicate $\mu = (x_t >0.2)$ is the time interval $[6,10]$.

\begin{definition}[Allowed Repairs]
Let $\Phi$ denote the set of all possible STL formulae. 
A \emph{repair action} is a relation $\gamma: \Phi \to \Phi$ consisting of the union of the following:
\begin{itemize}
\item A \emph{predicate repair} returns the original formula after modifying one of its atomic predicates $\mu$ to $\mu^*$. We denote this sort of repair by $\varphi[\mu \mapsto \mu^*] \in \gamma(\varphi)$; 
\item A \emph{time interval repair} returns the original formula after replacing the interval of a temporal operator. This is denoted $\varphi[\Delta_{[a,b]} \mapsto \Delta_{[a^*,b^*]}] \in \gamma(\varphi)$ where $\Delta \in \{\G, \F, \U\}$.
\end{itemize}
\end{definition}

\noindent Repair actions can be composed to get a \emph{sequence of repairs} $\Gamma=\gamma_n(\gamma_{n-1}(\dots(\gamma_1(\varphi))\dots))$. 
Given an STL formula $\varphi$, we denote as $\mathtt{REPAIR}(\varphi)$ the set of all possible formulae obtained through compositions of allowed repair actions on $\varphi$. 
Moreover, given a set of atomic predicates $\cal D$ and a set of time intervals $\cal T$, we use $\mathtt{REPAIR}_{{\cal T,D}}(\varphi) \subseteq \mathtt{REPAIR}(\varphi)$ to denote the set of repair actions that act only on predicates in $\cal D$ or time intervals in $\cal T$.
We are now ready to provide the formulation of the problems addressed in the paper, both in terms of diagnosis and repair of a \emph{monolithic} specification $\varphi$ (\emph{general diagnosis and repair}) and an A/G contract $(\varphi_e, \varphi_e \rightarrow \varphi_s)$ (\emph{contract diagnosis and repair}).
\begin{problem}[General Diagnosis and Repair] \label{prob:diagnosis-simple}
\sloppypar
Given a controller synthesis problem $\P = (f_d, g_d, x_0,\varphi, J)$ such that~\eqref{eq:opt_simple} is infeasible, find: 
\begin{itemize}
\item  A set of atomic predicates $\mathcal{D}=\{\mu_1,\dotsc,\mu_d\}$ or time intervals $\mathcal{T}=\{ \tau_1,\dotsc, \tau_d \}$ of the original formula $\varphi$,
\item $\varphi' \in \mathtt{REPAIR}_{{\cal T,D}}(\varphi)$,
\end{itemize}
such that $\mathcal{P'} = (f_d, g_d, x_0,\varphi', J)$ is feasible, and the following minimality conditions hold:
\begin{itemize}
\item \emph{(predicate minimality)} if $\varphi'$ is obtained by predicate repair\footnote{For technical reasons, our minimality conditions are predicated on a single type of repair being applied to obtain $\varphi'$.}, 
 $s_i = \mu_i^* - \mu_i$ for $i \in \{1,\dotsc,d\}$, $s_{\mathcal{D}} = (s_1,\dotsc,s_d)$, and $|| \cdot ||$ is a norm on $\mathbb{R}^d$, then
\begin{equation}
\begin{aligned}
 &\nexists \: (\mathcal{D'}, s_{\mathcal{D'}}) \quad \emph{s.t.} \quad ||s_{\mathcal{D'}}|| \leq ||s_{\mathcal{D}}||
\end{aligned}
\end{equation}
and $\mathcal{P''} = (f_d, g_d, x_0,\varphi'', J)$ is feasible, with $\varphi'' \in \mathtt{REPAIR}_{{\cal D'}}(\varphi).$
\item \emph{(time interval minimality)} if $\varphi'$ is obtained by time interval repair, 
$\mathcal{T}^* = \{\tau_1^*,\dotsc, \tau_l^*\}$ are the non-empty repaired intervals, and $|| \tau ||$ is the length of interval $\tau$:
\begin{equation}
\begin{aligned}
	& \nexists \: \mathcal{T'} = \{\tau_1',\dotsc, \tau_l'\},  \emph{ s.t. } \exists i \in \{1,\dotsc,l \},	||\tau^*_i|| \leq || \tau'_i|| \\
\end{aligned}
\end{equation}
and $\mathcal{P''} = (f_d, g_d, x_0,\varphi'', J)$ is feasible, with $\varphi'' \in \mathtt{REPAIR}_{{\cal T'}}(\varphi).$
\end{itemize}
\end{problem}

\begin{problem}[Contract Diagnosis and Repair]\label{prob:repair-contract}
\sloppypar
Given a controller synthesis problem $\P = (f_d, g_d, x_0,\varphi\equiv\varphi_e\rightarrow\varphi_s, J)$ such that~\eqref{eq:opt} is infeasible, find: 
\begin{itemize}
\item  Sets of atomic predicates $\mathcal{D}_e=\{\mu^e_1,\dotsc,\mu^e_{d}\}$, $\mathcal{D}_s=\{\mu^s_1,\dotsc,\mu^s_{\bar{d}}\}$  or sets of time intervals $\mathcal{T}_e=\{ \tau^e_1,\dotsc, \tau^e_l \}$,$\mathcal{T}_s=\{ \tau^s_1,\dotsc, \tau^s_{\bar{l}} \}$, respectively, of the original formulas $\varphi_e$ and $\varphi_s$,
\item $\varphi_e' \in \mathtt{REPAIR}_{\mathcal{T}_e, \mathcal{D}_e}(\varphi_e)$,  $\varphi_s' \in \mathtt{REPAIR}_{\mathcal{T}_s, \mathcal{D}_s}(\varphi_s)$.
\end{itemize}
such that $\mathcal{P'} = (f_d, g_d, x_0,\varphi', J)$ is feasible, and $\mathcal{D} = \mathcal{D}_e \cup \mathcal{D}_s$, $\mathcal{T} = \mathcal{T}_e \cup \mathcal{T}_s$, and $\varphi'$ satisfy the minimality conditions of Problem~\eqref{prob:diagnosis-simple}.
\end{problem}

%
%
In the following sections, we discuss our solutions to the above problems. 

\section{Monolithic Specifications}
\label{sec:diagnosis}

\begin{figure}
\centering
\includegraphics[width=0.55\textwidth]{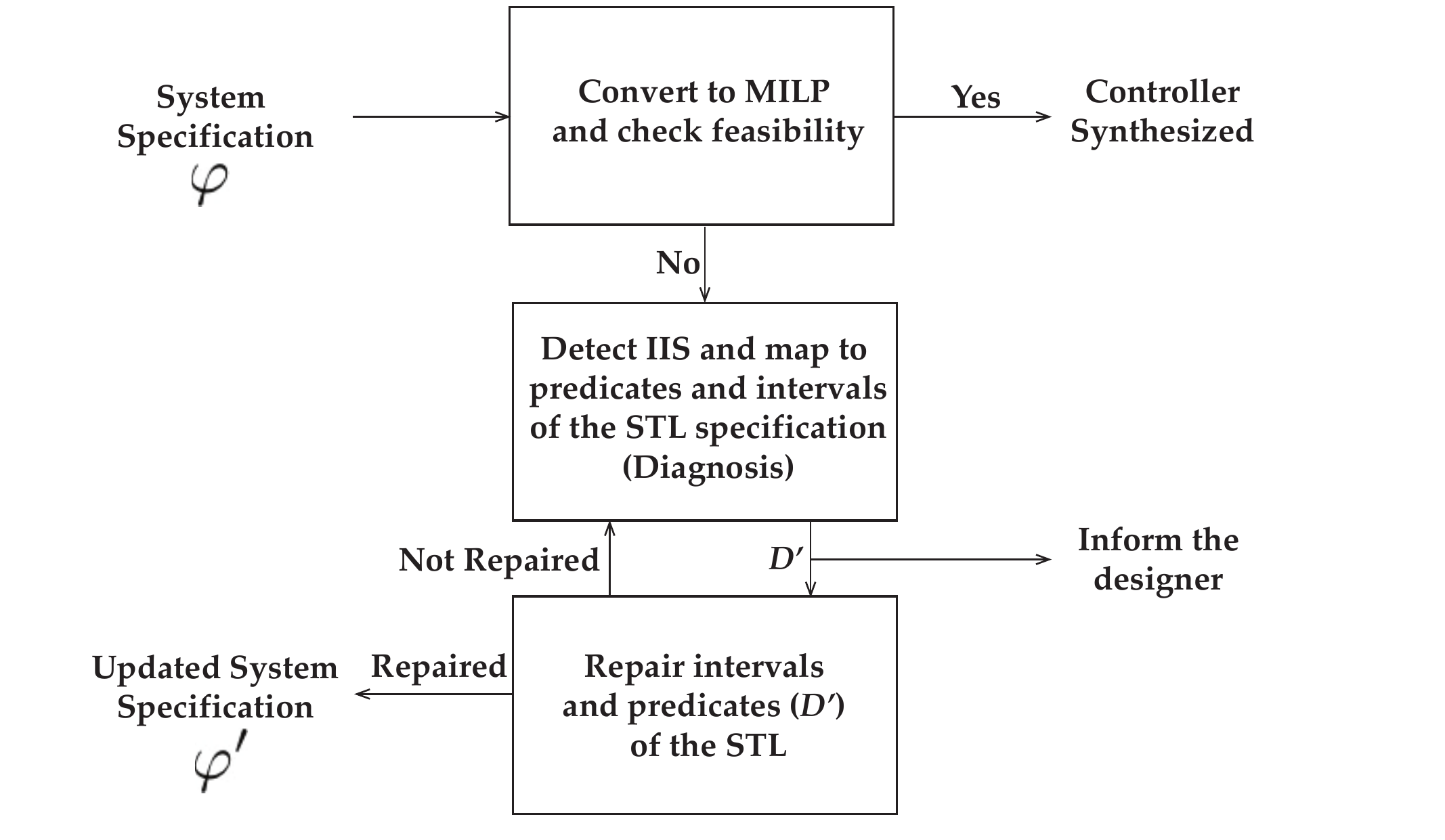}
\caption{\small{Diagnosis and repair flow diagram.}}
\label{fig:diagram}
\end{figure}

\begin{algorithm}[t]
\caption{\small{{\tt DiagnoseRepair}}}\label{algo:DiagnosisRepair}
\scriptsize
\begin{algorithmic}[1]
\medskip
\Procedure{{\tt DiagnoseRepair}}{}
  \vspace{1mm}
  \State {\textbf{Input:} $\mathcal{P}$}
  \vspace{1mm}
  \State {\textbf{Output: } $\mathbf{u}^H$, $\mathcal{D}$, $repaired$, $\varphi'$}
  \vspace{1mm}
  \State {$ (J, C) \gets$ {\tt GenMILP($\mathcal{P}$)}, $repaired \gets 0$} 
   \vspace{1mm}
  \State {$\mathbf{u}^H \gets$ {\tt Solve($J,C$)}}
  \vspace{1mm}
  \If {$\mathbf{u}^H = \emptyset$}
  	\vspace{1mm}
	\State {$\mathcal{D} \gets \emptyset$, $\mathcal{S} \gets \emptyset$, $I \gets \emptyset$,  $\mathcal{M} \gets (0, C)$ }
	\vspace{1mm}
	\While {$repaired = 0$}
	  \vspace{1mm}
	   \State{$(\mathcal{D}', \mathcal{S}', I') \gets$ {\tt Diagnosis($\mathcal{M}$, $\mathcal{P}$)}}
	  \vspace{1mm}
	  \State{$\mathcal{D} \gets \mathcal{D} \cup \mathcal{D}'$,  $\mathcal{S} \gets \mathcal{S} \cup \mathcal{S}'$, $I \gets I \cup I'$}
	  \vspace{1mm}
	  \State $options \gets $ \tt{UserInput($\mathcal{D}'$)}
	  \vspace{1mm}
	  \State{$\lambda$ $\gets$ {\tt ModifyConstraints($I', options$)}}
	  \vspace{1mm}
	  \State{$(repaired, \mathcal{M},\varphi')\gets{\tt Repair}(\mathcal{M}, I', \lambda, \mathcal{S}, \varphi)$}
	\EndWhile
	 \vspace{1mm}
	 \State {$\mathbf{u}^H \gets$ {\tt Solve($J$, $\mathcal{M}.C$)}}
  \EndIf
\EndProcedure 
\medskip
\end{algorithmic}
\end{algorithm}

The scheme adopted to diagnose inconsistencies in the specification and provide constructive feedback to the designer is pictorially represented in Fig.~\ref{fig:diagram}. In this section we find a solution for Problem~\ref{prob:diagnosis-simple}, as summarized in Algorithm~\ref{algo:DiagnosisRepair}.  Given a problem $\mathcal{P}$, defined as in Section~\ref{sec:probdef}, \texttt{GenMILP} reformulates~\eqref{eq:opt_simple} in terms of the following MILP:
\begin{equation} \label{opt:plain-diagnosis}
\openup -1\jot
\begin{aligned}
	 \underset{\uH}{\text{minimize}}
	& \quad J(\xi^H) &\\
	 \text{subject to}
	& \quad \fdyn_i \leq 0 & \quad i \in \{ 1,\dotsc, m_d\}\\
	& \quad \fstl_k \leq 0 & \quad k \in \{ 1,\dotsc, m_s\},\\
\end{aligned}	
\end{equation}
where $\fdyn$ and $\fstl$ are mixed integer linear constraint functions over the states, outputs, and inputs of the finite horizon trajectory $\xi^H$ associated, respectively, with the system dynamics and the STL specification $\varphi$. 
We let $(J,C)$ represent this MILP, where $J$ is the objective, and $C$ is the set of constraints.
If problem~\eqref{opt:plain-diagnosis} is infeasible, we iterate between diagnosis and repair phases until the repaired feasible specification $\varphi'$ is obtained. We let $\mathcal{D}$ and $I$ denote, respectively, the set of predicates returned by the diagnosis procedure, and the constraints corresponding to those predicates.

Optionally, we support an interactive repair mechanism, where the designer provides a set of  $options$ that prioritize which predicates to modify ({\tt UserInput} procedure) and get converted into a set of weights $\lambda$  ({\tt ModifyConstraints} routine).
The designer can then leverage this weighted-cost variant of the problem to define ``soft'' and ``hard'' constraints in the controller synthesis problem. In the following, we detail the operation of the \texttt{Diagnosis} and \texttt{Repair} routines.

\subsection{Diagnosis}

\begin{algorithm}[t]
\caption{\small{{\tt Diagnosis}}}\label{algo:Diagnosis}
\scriptsize
\begin{algorithmic}[1]
\medskip
\Procedure{{\tt Diagnosis}}{$\mathcal{M}$, $\mathcal{P}$}
  \vspace{1mm}
  \State {\textbf{Input:} $\mathcal{M}$, $\mathcal{P}$}
  \vspace{1mm}
  \State {\textbf{Output:} $\mathcal{D}$, $\mathcal{S}$, $I'$}
  \vspace{1mm}
  \State {$I_C \gets$ {\tt IIS($\mathcal{M}$)}} 
  \vspace{1mm}
  \State {$(\mathcal{D}, \mathcal{S})  \gets$ {\tt ExtractPredicates($I_C$,$\mathcal{P}$)}}
  \vspace{1mm}
  \State {$I' \gets$ {\tt ExtractConstraints($\mathcal{M}$,$\mathcal{D}$)}}
  \vspace{1mm}
\EndProcedure 
\medskip
\end{algorithmic}
\end{algorithm}



Our diagnosis procedure is summarized in Algorithm~\ref{algo:Diagnosis}.  \texttt{Diagnosis} receives as inputs the controller synthesis problem $\mathcal{P}$ and an associated MILP formulation $\mathcal{M}$. $\mathcal{M}$ can either be the \emph{feasibility problem} associated with the original problem~\eqref{opt:plain-diagnosis}, or a relaxation of it. This feasibility problem has the same, possibly relaxed, constraints as~\eqref{opt:plain-diagnosis} but zero cost. 
Formally, we provide the following definition of relaxed constraint and relaxed optimization problem. 

\begin{definition}[Relaxed Problem]\label{def:relaxed}
We say that a constraint $f' \leq 0$ is a \emph{relaxed version} of $f \leq 0$ if there exists a slack variable $s \in \mathbb{R}^+$ such that $f' = (f - s)$. In this case, we also say that $f \leq 0$ is relaxed into $f' \leq 0$. Then, an optimization problem $\mathcal{O}'$ is a \emph{relaxed version} of another optimization problem $\mathcal{O}$ if it is obtained from $\mathcal{O}$ by relaxing at least one of its constraints. 
\end{definition}

When $\mathcal{M}$ is infeasible, 
we rely on the capability of state-of-the-art MILP solvers to provide an \emph{Irreducibly Inconsistent System} (IIS)~\cite{Gurobi,Chinneck1991} of constraints $I_C$, defined as follows.

\begin{definition}[Irreducibly Inconsistent System]\label{def:iis}
Given a feasibility problem  $\mathcal{M}$ with constraint set $C$, an \emph{Irreducibly Inconsistent System} $I_C$ is a subset of constraints $I_C \subseteq C$ such that: (i) the optimization problem $(0, I_C)$ is infeasible; (ii) $\forall \ c \in I_C$, problem $(0, I_C \setminus \{c\})$ is feasible.   
\end{definition}

In other words, an IIS is an infeasible subset of constraints that becomes feasible if any single constraint is removed. For each constraint in $I_C$, \texttt{ExtractPredicates} traces back the STL predicate(s) originating it, which will be used to construct the set $\mathcal{D}=\{\mu_1, \ldots,\mu_d\}$ of STL atomic predicates in Problem~\ref{prob:diagnosis-simple}, and the corresponding set of support intervals $\mathcal{S}=\{\sigma_1,\ldots, \sigma_d\}$ (adequately truncated to the current horizon $H$), as obtained from the STL syntax tree. $\mathcal{D}$ will be used to produce a relaxed version of $\mathcal{M}$ as further detailed in Section~\ref{sec:repair}. For this purpose, the procedure also returns the subset $I$ of all the constraints in $\mathcal{M}$ that are associated with the predicates in $\mathcal{D}$.

\subsection{Repair}
\label{sec:repair}
The diagnosis procedure isolates a set of STL atomic predicates that jointly produce a reason of infeasibility for the synthesis problem. For repair, we are instead interested in how to modify the original formula to make the problem feasible. The repair procedure is summarized in Algorithm~\ref{algo:Repair}. We formulate relaxed versions of the feasibility problem $\mathcal{M}$ associated with problem~\eqref{opt:plain-diagnosis} by using \emph{slack variables}. 

Let $f_i$, $i \in \{1,\ldots,m\}$ denote both of the categories of 
constraints $\fdyn$ and $\fstl$ in the feasibility problem $\mathcal{M}$.  We reformulate $\mathcal{M}$ into the following \emph{slack feasibility problem}:
%
\begin{equation}
\label{eq:slack1}
\begin{aligned}
	\underset{\s \in \mathbb{R}^{|I|}}{\text{minimize}}
	& \quad ||\s|| &\\
	 \text{subject to}
	& \quad f_i - s_i \leq 0 & \quad i \in \{ 1,\dotsc, |I|\} \\
	& \quad f_i \leq 0 & \quad i \in \{ |I|+1,\dotsc, m\} \\
	& \quad s_i \geq 0 & \quad i \in \{ 1,\dotsc, |I|\},
\end{aligned}	
\end{equation}
where $\s = s_1...s_{|I|}$ is a vector of slack variables added to the subset of optimization constraints $I$, as obtained after the latest call of \texttt{Diagnosis},  to make the problem feasible. Not all the constraints in the original optimization problem~\eqref{opt:plain-diagnosis} can be modified. For instance, the designer will not be able to arbitrarily modify constraints that can directly affect the dynamics of the system, \ie constraints encoded in $\fdyn$. Solving problem~\eqref{eq:slack1} is equivalent to looking for a set of slacks that make the original control problem feasible while minimizing a suitable norm $|| \cdot ||$ of the slack vector. In most of our application examples, we choose the $l_1$-norm, which tends to provide sparser solutions for $\s$, i.e.,~nonzero slacks for a smaller number of constraints. However, other norms can also be used, including weighted norms based on the set of weights $\lambda$. If problem~\eqref{eq:slack1} is feasible, \texttt{ExtractFeedback} uses the solution $\s^*$ to repair the original infeasible specification
$\varphi$. 
Otherwise, the infeasible problem is returned for another round of diagnosis to retrieve further constraints to relax. In what follows, we 
provide details on the implementation of \texttt{ExtractFeedback}.

\begin{algorithm}[t]
\caption{\small{{\tt Repair}}}\label{algo:Repair}
\scriptsize
\begin{algorithmic}[1]
\medskip
\Procedure{{\tt Repair}}{}
  \vspace{1mm}
 \State {\textbf{Input:} $\mathcal{M}$, $I$, $\lambda$, $\mathcal{S}$, $\varphi$}
  \vspace{1mm}
  \State {\textbf{Output:} $repaired$, $\mathcal{M}$, $\varphi$}
  \vspace{1mm}  
  \State {$\mathcal{M}.J \gets \mathcal{M}.J  + \lambda^{\top} s_I$}
  \vspace{1mm}
  \For {$c$ in $I$}
  	\vspace{1mm}
\If {$\lambda(c) > 0$} 
\vspace{1mm}
	\State{$\mathcal{M}.C(c) \gets \mathcal{M}.C(c) + s_c$}
	\vspace{1mm}
\EndIf
  \EndFor
  \State{($repaired$, $\mathbf{s}^*$) $\gets$ {\tt Solve($\mathcal{M}.J$, $\mathcal{M}.C$)}}
  \vspace{1mm}
\If {$repaired = 1$} 
 \vspace{1mm}
	\State{$\varphi \gets$ {\tt ExtractFeedback}($\mathbf{s}^*$,$\mathcal{S}$,$\varphi$)}
	\vspace{1mm}
\EndIf
\EndProcedure 
\medskip
\end{algorithmic}
\end{algorithm}

If a minimum norm solution $\s^*$ can be found, then the slack variables $\s^*$ can be mapped to a set of \emph{predicate repairs} $s_{\mathcal{D}}$, as defined in Problem~\ref{prob:diagnosis-simple}, as follows. The slack vector $\s^*$ in Algorithm~\ref{algo:Repair} includes the set of slack variables $\{s^*_{\mu_i,t}\}$, where $s^*_{\mu_i,t}$ is the variable added to the optimization constraint associated with an atomic predicate $\mu_i \in \mathcal{D}$ at time $t$, $i \in \{1,\ldots,d\}$. We then set 
\begin{equation} \label{eq:predrepsol}
\forall \ i \in \{1,\ldots,d\} \ s_i = \mu_i^* - \mu_i = \max_{t\in \{1,\cdots,H\}} s^*_{\mu_i,t},
\end{equation}
$H$ being the time horizon for~\eqref{opt:plain-diagnosis}, and $s_{\mathcal{D}} = \{s_1, \ldots, s_d \}$. 

To find a set of \emph{time-interval repairs}, we proceed, instead, as follows: 
\begin{enumerate}
	\item The slack vector $\s^*$ in Algorithm~\ref{algo:Repair} includes the set of slack variables $\{s^*_{\mu_i,t}\}$, where $s^*_{\mu_i,t}$ is the variable added to the optimization constraint associated with an atomic predicate $\mu_i \in \mathcal{D}$ at time $t$. For each $\mu_i \in \mathcal{D}$, with support interval $\sigma_i$, we search for the largest time interval $\sigma'_i \subseteq \sigma_i$ such that the slack variables $s^*_{\mu_i,t}$  for $t \in \sigma'_i$  are $0$. If $\mu_i \notin \mathcal{D}$, then we set $\sigma'_i = \sigma_i$.

	\item We convert every temporal operator in $\varphi$ into a combination of $\G$ (timed or untimed) and untimed $\U$ by using the following transformations: 
$$ \F_{[a,b]} \psi = \neg \G_{[a,b]} \neg \psi,$$ 
$$ \psi_1 \U_{[a,b]} \psi_2 = \G_{[0, a]}(\psi_1 \U \  \psi_2) \wedge \F_{[a,b]} \psi_2,$$
where $\U$ is the untimed (unbounded) \emph{until} operator. Let  $\hat{\varphi}$ be the new formula obtained from $\varphi$ after applying these transformations\footnote{While the second transformation introduces a new interval $[0,a]$, its parameters are directly linked to the ones of the original interval $[a,b]$ (now inherited by the $\F$ operator) and will be accordingly processed by the repair routine.}.   
 
	\item The nodes of the parse tree of $\hat{\varphi}$ can then be partitioned into three subsets, $\nu$, $\kappa$, and $\delta$, respectively associated with the atomic \emph{predicates}, \emph{Boolean operators}, and \emph{temporal operators} ($\G, \U$) in $\hat{\varphi}$. We traverse this parse tree from the leaves (atomic predicates) to the root and recursively define for each node $i$ a new support interval $\sigma^*_i$ as follows:
	\begin{equation} \label{eq:tempalgo}
		\sigma^*_i = 
		\begin{cases}
			 \sigma'_i & \mathrm{if} \ i \in \nu \\
			 \underset{j \in C(i)}{\bigcap} \sigma^*_j & \mathrm{if} \ i \in \kappa \cup \delta_{\U} \\
			 \sigma^*_{C(i)} &  \mathrm{if} \ i \in \delta_{\G}  
		\end{cases}
	\end{equation}  
where $C(i)$ denotes the children of node $i$,  while $\delta_{\G}$ and $\delta_{\U}$ are, respectively, the subsets of nodes associated with the $\G$ and $\U$ operators. We observe that the children set of a $\G$ operator node is a singleton. Therefore, with some abuse of notation, we also use $C(i)$ in~\eqref{eq:tempalgo} to denote a single node in the parse tree.   
	 
	\item We define the interval repair $\hat{\tau}_j$ for each (timed) temporal operator node $j$ in the parse tree of $\hat{\varphi}$ as  $\hat{\tau}_j = \sigma^*_j$.  If $\hat{\tau}_j$ is empty for any $j$,  no time-interval repair is possible. Otherwise, we map back the set of intervals $\{\hat{\tau}_j\}$ into a set of interval repairs $\mathcal{T}^*$ for the original formula $\varphi$ according to the transformations in step 2 and return $\mathcal{T}^*$.  
\end{enumerate}
%
We provide an example of predicate repair below, while time interval repair is exemplified in Section~\ref{sec:nonadv}.

%
\begin{example}[Collision Avoidance]
\sloppypar
We diagnose the specifications introduced in Example~\ref{sec:example1}. 
To formulate the synthesis problem, we assume a horizon $H = 10$ and a discretization step $\Delta t = 0.2$.  
The system is found infeasible at the first MPC run, and \texttt{Diagnosis} detects the infeasibility of $\varphi_1 \land \varphi_2$ at time $t = 6$. Intuitively, given the allowed range of accelerations for \emph{ego}, both the cars end up with entering the forbidden box at some time. 
Algorithm~\ref{algo:DiagnosisRepair} chooses to repair $\varphi_1$ by adding slacks to all of its predicates, such that $\varphi'_1 = (-0.5 - s_{l1} \leq \yegot \leq 0.5 + s_{u1}) \land (-0.5 - s_{l2} \leq \xadvt \leq 0.5 + s_{u2})$. Table~\ref{table:slack1} shows the optimal slack values at each $t$, while  $s_{u1}$ and $s_{l2}$ are set to zero at all $t$. 
%
%
We can then conclude that the specification replacing $\varphi_1$ with $\varphi'_1$
\begin{equation}
	\varphi'_1 = \G_{[0,\infty)} \neg \big((-0.24 \leq \yegot \leq 0.5) \wedge (-0.5 \leq \xadvt \leq 0.43) \big) 
\end{equation}  
is feasible, i.e., the cars will not collide, but the original requirement was overly demanding. 
 
Alternatively, the user can choose to run the repair procedure on $\varphi_2$ and change its predicate as $(1.5 - s_{l} \leq \aegot \leq 2.5 + s_{u})$. In this case, we decide to stick with the original requirement on collision avoidance, and tune, instead, the control ``effort'' to satisfy it. Under the assumption of constant acceleration (and bounds), the slacks will be the same at all $t$. We then obtain $[s_{l}, s_{u}]$ = $[0.82, 0]$, which ultimately turns into
$\varphi'_2 = \G_{[0,\infty)} \big(0.68 \leq \aegot \leq 2.5 \big).$ The \emph{ego} vehicle should then slow down to prevent entering the forbidden box at the same time as the other car. This latter solution is, however, suboptimal with respect to the $l_1$-norm selected in this example. 
\end{example}

\begin{table}
\begin{center}
\scalebox{0.9}{
\begin{tabular}{ c|c|c|c|c|c|c|c|c|c|c}
 time &  0 & 0.2 & 0.4 & 0.6 & 0.8 & 1 & 1.2 & 1.4 & 1.6 & 1.8\\
  \hline
  $s_{l1}$ & 0 & 0 & 0 & 0 & 0 & -0.26 & 0 & 0 & 0 & 0 \\
  \hline
    $s_{u2}$ & 0 & 0 & 0 & 0 & 0 & 0 & -0.07 & 0 & 0 & 0 \\
\end{tabular}}
\end{center}
\caption{\small{Slack variables for horizon, with $\Delta t = 0.2$, and $H=10$.}}
\label{table:slack1}
\end{table}


Our algorithm offers the following guarantees, for which a proof is reported below. 
\ignore{The complete proofs can be found in an extended version of our paper~\cite{}.
}
%
%
\begin{theorem}[Soundness] \label{thm:SoundnessGeneral}
Given a controller synthesis problem $\P = (f_d, g_d, x_0, \varphi, J)$, such that~\eqref{eq:opt_simple} is infeasible at time $t$, let $\varphi' \in \texttt{REPAIR}_{\mathcal{D},\mathcal{T}}(\varphi)$ be the repaired formula returned from Algorithm~\ref{algo:DiagnosisRepair} without human intervention,
for a given set of predicates $\mathcal{D}$ or time interval $\mathcal{T}$. Then, $\P' = (f_d, g_d, x_0, \varphi', J)$ is feasible at time $t$ and ($\varphi'$, $\mathcal{D}$, $\mathcal{T}$) satisfy the minimality conditions in Problem~\ref{prob:diagnosis-simple}.
\end{theorem}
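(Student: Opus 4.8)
The plan is to prove soundness and minimality separately, in each case distinguishing the predicate-repair and time-interval-repair branches of Algorithm~\ref{algo:DiagnosisRepair}. For \emph{feasibility at time $t$}, I would argue that the slack feasibility problem~\eqref{eq:slack1} solved inside \texttt{Repair} has, by construction, the same constraint set as the MILP~\eqref{opt:plain-diagnosis} except that the constraints in $I$ (those traced back from the IIS to the predicates in $\mathcal{D}$) are relaxed by nonnegative slacks. When \texttt{Repair} returns $repaired = 1$, the returned $\mathcal{M}.C$ is exactly the constraint set of a relaxed problem admitting the minimizer $\s^*$ as a witness; hence $\mathcal{P}'$ with $\varphi' = \varphi[\mu \mapsto \mu^*]$ (resp.\ the time-interval-repaired formula) is feasible at $t$. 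The key sub-step here is to check that the mapping from slack values back to repaired predicates in~\eqref{eq:predrepsol}, $s_i = \max_{t} s^*_{\mu_i,t}$, yields a formula whose MILP encoding is \emph{at least as loose} as the slacked one: because $\mu_i^* = \mu_i + s_i$ uniformly over the horizon and $s_i \ge s^*_{\mu_i,t}$ for every $t$ in the support, every trajectory satisfying the slacked constraints also satisfies the encoding of $\varphi'$, so feasibility is preserved. For the time-interval branch I would invoke the correctness of the recursion~\eqref{eq:tempalgo}: the intervals $\sigma^*_i$ are constructed so that on $\sigma^*_i$ all relevant slacks vanish, hence the original (un-slacked) constraints already hold there, so restricting each temporal operator to $\hat\tau_j = \sigma^*_j$ gives a formula satisfiable by the \emph{unrelaxed} trajectory — again feasibility at $t$. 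The transformations in step~2 ($\F$ and timed-$\U$ rewritten via $\G$ and untimed $\U$) are semantics-preserving by the STL semantics~\eqref{eq:STL}, so mapping $\mathcal{T}^*$ back to $\varphi$ is sound.

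For \emph{minimality} I would use the optimality of the solver call in~\eqref{eq:slack1} together with the way $\mathcal{D}$ is obtained from the IIS. In the predicate case: suppose for contradiction there were $(\mathcal{D}', s_{\mathcal{D}'})$ with $\|s_{\mathcal{D}'}\| \le \|s_{\mathcal{D}}\|$ and $\mathcal{P}''$ feasible for the corresponding $\varphi'' \in \texttt{REPAIR}_{\mathcal{D}'}(\varphi)$. Since \texttt{Diagnosis} returns an IIS and \texttt{Repair} relaxes \emph{exactly} the constraints $I$ attached to $\mathcal{D}$, and the outer loop keeps enlarging $\mathcal{D}$ until feasibility, the final $\s^*$ is a minimum-norm slack vector over \emph{all} constraints that could possibly need relaxing (any constraint outside the accumulated $I$ is either a dynamics constraint $\fdyn$, which is not repairable, or has never appeared in any IIS and hence can be left tight). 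Translating $s_{\mathcal{D}'}$ into a slack assignment on the MILP constraints gives a feasible point of~\eqref{eq:slack1} (after possibly one more diagnosis round) with norm no larger than $\|\s^*\|$; by~\eqref{eq:predrepsol} the induced predicate-repair vector has norm $\le \|s_{\mathcal{D}'}\| \le \|s_{\mathcal{D}}\|$, and by optimality of $\s^*$ it must be equal, contradicting the strict non-existence claim (or showing the claimed witness does not beat ours). For the time-interval case I would argue that step~1 chooses, for each $\mu_i$, the \emph{largest} sub-interval on which the slacks vanish, and the recursion~\eqref{eq:tempalgo} only intersects at Boolean/$\U$ nodes and copies at $\G$ nodes, so each $\hat\tau_j$ is the maximal interval compatible with feasibility along that branch; any $\mathcal{T}'$ with some $\|\tau'_i\| \ge \|\tau^*_i\|$ would force a slack to be nonzero somewhere in $\tau'_i \setminus \tau^*_i$, contradicting feasibility of $\mathcal{P}''$.

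I would organize the write-up as: (i) a lemma that \texttt{Repair} returning $repaired=1$ implies $\mathcal{P}'$ feasible at $t$, proved by exhibiting $\s^*$ as a witness and using~\eqref{eq:predrepsol} / the $\sigma^*$ recursion; (ii) a lemma that the rewriting in step~2 preserves STL semantics, so the final mapping back to $\varphi$ is legitimate; (iii) the predicate-minimality argument via optimality of~\eqref{eq:slack1}; (iv) the time-interval-minimality argument via maximality of the $\sigma^*_i$.

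\textbf{Main obstacle.} The delicate point is the minimality argument, specifically justifying that relaxing only the IIS-derived constraints $I$ loses no generality: one must rule out that a \emph{different} small set of predicates $\mathcal{D}'$, disjoint from $\mathcal{D}$, could yield a smaller-norm repair. This requires the observation that any repairable constraint appearing in \emph{some} IIS will eventually be added to $I$ by the outer \texttt{while} loop (so the accumulated $I$ is, in the limit, a superset of every minimal relaxation support), combined with the fact that the $l_1$ (or chosen) norm objective in~\eqref{eq:slack1} is minimized exactly over that support. Making this precise — essentially that the union of IISs encountered contains a ``covering'' set for every feasible relaxation — is where the argument needs the most care; the feasibility direction and the semantics-preservation of the $\F/\U$ rewriting are comparatively routine.
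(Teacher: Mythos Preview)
Your proposal is correct and follows essentially the same route as the paper's own proof: feasibility via the observation that the per-predicate repair $s_i=\max_t s^*_{\mu_i,t}$ makes $\mathcal{M}'$ a further relaxation of the already-feasible slacked problem $\mathcal{M}''$ (and, for time intervals, that restricting to the zero-slack sub-intervals $\sigma'_i$ leaves only satisfied constraints), and minimality via the claim that the iterative IIS extraction eventually collects \emph{all} predicates that can participate in any infeasibility, so that any competing repair $\tilde{\mathcal{D}}$ can be restricted to $\tilde{\mathcal{D}}\cap\mathcal{D}$ without loss and then dominated by the norm-optimal $\s^*$. You have also correctly identified the one genuinely delicate step---that the accumulated $I$ covers every minimal relaxation support---which the paper asserts (``Algorithm~\ref{algo:DiagnosisRepair} finds all the IISs in the original optimization problem'') rather than proves in detail; your plan to make this precise is exactly where the additional care is needed.
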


\ignore{
In many cases, a system specification can be conveniently provided as
a contract to emphasize what are the responsibilities of the system
under control versus the assumptions on the external, possibly
adversarial, environment. In such a scenario, besides
\emph{``weakening'' the guarantees}, 
realizability of a controller can also be achieved by
\emph{``tightening'' the assumptions}. We introduce a method to
diagnose and repair A/G contracts in Section~\ref{sec:contracts}. 
}

\ignore{
To introduce the proof of Theorem~\ref{thm:SoundnessGeneral}, we first provide the definition of \emph{relaxed version} of an optimization problem, which we will use throughout the paper. 

\begin{definition}[Relaxed Version]\label{def:relaxed}
We say that a constraint $f' \leq 0$ is a \emph{relaxed version} of $f \leq 0$ if there exists $s \in \mathbb{R}$, $s > 0$ such that $f' = (f - s)$. In this case, we also say that $f \leq 0$ is relaxed into $f' \leq 0$. Then, an optimization problem $\mathcal{O}'$ is a \emph{relaxed version} of another optimization problem $\mathcal{O}$ if it is obtained from $\mathcal{O}$ by relaxing at least one of its constraints. 
\end{definition}
}

%
%

\ignore{
We are now ready to prove Theorem~\ref{thm:SoundnessGeneral}. }

\begin{proof}[(Theorem~\ref{thm:SoundnessGeneral})]
	
Suppose $\mathcal{M}$ is the MILP encoding of $\mathcal{P}$ as defined in~\eqref{opt:plain-diagnosis},  $\varphi'$ is the repaired formula, and $\mathcal{D}$ the set of diagnosed predicates, as returned by Algorithm~\ref{algo:DiagnosisRepair}. We start by discussing the case of predicate repair. We let $\mathcal{M}'$ be the MILP encoding of $\mathcal{P}'$ and $\mathcal{D}^* \subseteq \mathcal{D}$ be the set of predicates that are fixed to provide $\varphi'$, i.e., such that $s = (\mu^* - \mu) \neq 0$, with $\mu \in  \mathcal{D}$.
Algorithm~\ref{algo:DiagnosisRepair} modifies $\mathcal{M}$ by introducing a slack variable $s_{\mu,t}$ into each constraint associated with an atomic predicate $\mu$ in $\mathcal{D}$ at time $t$. Such a transformation leads to a feasible MILP $\mathcal{M}''$ and an optimal slack set $\{s^*_{\mu,t} | \mu \in \mathcal{D}, t \in \{1,\ldots,H\}\}$. 
%
%
We now observe that $\mathcal{M}'$ and $\mathcal{M}''$ are both a relaxed version of $\mathcal{M}$. In fact, we can view $\mathcal{M}'$ as a version of  $\mathcal{M}$ in which only the constraints associated with the atomic predicates in $\mathcal{D}^*$ are relaxed. Therefore, each constraint having a nonzero slack variable in $\mathcal{M}''$ is also relaxed in $\mathcal{M}'$. Moreover, by~\eqref{eq:predrepsol}, the relaxed constraints in $\mathcal{M}'$ are offset by the largest slack value over the horizon $H$. Then, because $\mathcal{M}''$ is feasible, $\mathcal{M}'$, and subsequently $\mathcal{P}'$, are feasible. 

We now prove that ($\varphi'$,$\mathcal{D}$) satisfy the predicate minimality condition of Problem~\ref{prob:diagnosis-simple}.
Let $\tilde{\varphi}$ be any formula obtained from $\varphi$ after repairing a set of predicates $\tilde{\mathcal{D}}$ such that the resulting problem $\tilde{\mathcal{P}}$ is feasible. We recall that, by Definition~\ref{def:iis}, at least one predicate in $\mathcal{D}$ generates a conflicting constraint and must be repaired for $\mathcal{M}$ to become feasible. Then, $\tilde{\mathcal{D}} \cap \mathcal{D} \neq \emptyset$ holds. Furthermore, since Algorithm~\ref{algo:DiagnosisRepair} iterates by diagnosing and relaxing constraints until feasibility is achieved, $\mathcal{D}$ contains all the predicates that can be responsible for the infeasibility of $\varphi$. In other words, Algorithm~\ref{algo:DiagnosisRepair} finds all the IISs in the original optimization problem and allows relaxing any constraint in the union of the IISs. Therefore, repairing any predicate outside of $\mathcal{D}$ is redundant: 
a predicate repair set that only relaxes the constraints associated with predicates in $\bar{\mathcal{D}}= \tilde{\mathcal{D}} \cap \mathcal{D}$, by the same amount as in $\tilde{\varphi}$, and sets to zero the slack variables associated with predicates in $\mathcal{D}\setminus \bar{\mathcal{D}}$ is also effective and exhibits a smaller slack norm. Let $s_{\bar{\mathcal{D}}}$ be such a repair set and $\bar{\varphi}$ the corresponding repaired formula. $s_{\bar{\mathcal{D}}}$ and 
$s_{\mathcal{D}}$ can then be seen as two repair sets on the same predicate set. However, by the solution of Problem~\eqref{eq:slack1}, we are guaranteed that $s_{\mathcal{D}}$ has minimum norm; then,  $ || s_{\mathcal{D}} || \leq  || s_{\bar{\mathcal{D}}} ||$ will hold for any such formulas $\bar{\varphi}$, and hence $\tilde{\varphi}$. 

We now consider the MILP formulation $\mathcal{M}'$ associated with $\mathcal{P}'$ and $\varphi'$ in the case of time-interval repairs. For each atomic predicate $\mu_i \in \mathcal{D}$, for $i \in \{1, \ldots, |\mathcal{D}|\}$, $\mathcal{M}'$ includes only the associated constraints evaluated over time intervals $\sigma'_i$ for which the slack variables $\{s_{\mu_i,t}\}$ are zero. Such a subset of constraints is trivially feasible. 
All the other constraints, 
enforcing the satisfaction of Boolean and temporal combination of the atomic predicates in $\varphi'$ are also feasible if the atomic predicate constraints are feasible. Then, $\mathcal{M}'$ is feasible.

To show that ($\varphi'$,$\mathcal{T}$) satisfy the minimality condition in Problem~\ref{prob:diagnosis-simple}, we observe that, by the transformations in step 2 of the time-interval repair procedure, $\varphi$ is logically equivalent to a formula $\hat{\varphi}$ which only contains \emph{untimed} $\U$ and \emph{timed} $\G$ operators. Moreover, 
$\hat{\varphi}$ and $\varphi$ have the same interval parameters. Therefore, if the proposed repair set is minimal for 
$\hat{\varphi}$, this will also be the case for $\varphi$.  
We now observe that Algorithm~\ref{algo:DiagnosisRepair} selects, for each atomic predicate $\mu_i \in \mathcal{D}$ the largest interval $\sigma'_i$ such that the associated constraints are feasible, i.e., their slack variables are zero after norm minimization\footnote{Because we are not directly maximizing the sparsity of the slack vector, time-interval minimality is to be interpreted with respect to slack norm minimization. Directly maximizing the number of zero slacks is also possible but computationally more intensive.}. Because feasible intervals for Boolean combinations of atomic predicates are obtained by intersecting these maximal intervals, and then propagated to the temporal operators, the length of the intervals of each $\G$ operator in $\hat{\varphi}$, hence of the temporal operators in $\varphi$, will also be maximal, which is what we wanted to prove.    
\end{proof}

\begin{theorem}[Completeness] \label{thm:CompletenessGeneral}
\sloppypar
Assume the controller synthesis problem $\P = (f_d, g_d, x_0, \varphi, J)$ results in~\eqref{eq:opt_simple} being infeasible at time $t$.
If there exist a set of predicates $\mathcal{D}$ or time-intervals $\mathcal{T}$ such that there exists $\Phi \subseteq \texttt{REPAIR}_{\mathcal{D},\mathcal{T}}(\varphi)$ for which $\forall \ \phi \in \Phi$, $\P' = (f_d, g_d, x_0, \phi, J)$  is feasible at time $t$ and ($\phi$, $\mathcal{D}$, $\mathcal{T}$) are minimal in the sense of Problem~\ref{prob:diagnosis-simple}, then Algorithm~\ref{algo:DiagnosisRepair} returns a repaired formula $\varphi'$ in $\Phi$.  
\end{theorem}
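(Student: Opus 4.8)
The plan is to show that Algorithm~\ref{algo:DiagnosisRepair}, when run without human intervention, will necessarily terminate and produce a repaired formula that lies in the set $\Phi$ of minimal, feasibility-restoring repairs — whenever such a set is nonempty. The argument is essentially a termination-plus-uniqueness argument: first, I would establish that the diagnosis/repair loop terminates; second, that upon termination the returned $\varphi'$ is feasible and minimal (this is exactly Theorem~\ref{thm:SoundnessGeneral}); third, that any two minimal repairs of the given type coincide up to the indeterminacy allowed by Problem~\ref{prob:diagnosis-simple}, so that the soundness-guaranteed output must already be in $\Phi$.

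\textbf{Step 1 (termination).} I would argue that the \texttt{while} loop in Algorithm~\ref{algo:DiagnosisRepair} runs for finitely many iterations. Each call to \texttt{Diagnosis} returns an IIS $I_C$ (Definition~\ref{def:iis}) and its associated predicates $\mathcal{D}'$; these are accumulated into $\mathcal{D}$, and the corresponding constraints $I'$ are relaxed via slack variables in the subsequent \texttt{Repair} call. Since the MILP $\mathcal{M}$ has finitely many constraints, and each iteration that fails to repair strictly enlarges the set of relaxed constraints (a new IIS not previously covered must contribute at least one fresh constraint, else the slack problem~\eqref{eq:slack1} would already have been feasible), the loop terminates after at most $|C|$ iterations. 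At termination, either \eqref{eq:slack1} is feasible — in which case \texttt{ExtractFeedback} produces $\varphi'$ via~\eqref{eq:predrepsol} (predicate repair) or the step~1--4 procedure (time-interval repair) — or all relaxable constraints have been exhausted, which, under the hypothesis that some repair in $\texttt{REPAIR}_{\mathcal{D},\mathcal{T}}(\varphi)$ is feasible, cannot happen; I would make this last point precise by noting that the hypothesized feasible $\phi \in \Phi$ relaxes only constraints associated with predicates in $\mathcal{D}$ (resp.\ intervals in $\mathcal{T}$), all of which the algorithm eventually reaches.

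\textbf{Step 2 (the output is minimal and feasible).} This is immediate from Theorem~\ref{thm:SoundnessGeneral}: the $\varphi'$ returned by Algorithm~\ref{algo:DiagnosisRepair} without human intervention satisfies feasibility of $\mathcal{P}'$ at time $t$ and the predicate- or time-interval-minimality conditions of Problem~\ref{prob:diagnosis-simple}. So $\varphi'$ is \emph{a} minimal repair; it remains to show it is one of the \emph{prescribed} minimal repairs, i.e., $\varphi' \in \Phi$.

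\textbf{Step 3 (membership in $\Phi$) — the main obstacle.} Here the argument depends on reading the minimality conditions of Problem~\ref{prob:diagnosis-simple} correctly. In the predicate case, minimality says no repair set of strictly smaller slack norm makes the problem feasible; since problem~\eqref{eq:slack1} is solved to optimality, $s_{\mathcal{D}}$ attains the minimum norm among \emph{all} feasibility-restoring slack assignments on $\mathcal{D}$, and by the soundness argument this norm is also a lower bound over repairs on any other predicate set. Thus every $\phi \in \Phi$ has slack norm equal to $\|s_{\mathcal{D}}\|$, so $\varphi'$ — which realizes that optimal norm on $\mathcal{D}$ — qualifies for membership in $\Phi$; one then observes that $\Phi$, being the set of \emph{all} minimal repairs for the given $(\mathcal{D},\mathcal{T})$, contains every formula meeting these conditions, hence contains $\varphi'$. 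The time-interval case is analogous using the maximality of the recovered intervals $\sigma'_i$ and the propagation rule~\eqref{eq:tempalgo}. The subtle point I expect to wrestle with is that the minimum-norm slack vector need not be unique, so $\varphi'$ and a given $\phi \in \Phi$ could differ in \emph{which} predicates carry nonzero slack; I would handle this by arguing that $\Phi$ is exactly characterized by the minimality conditions (it is not a singleton but the full solution set), so any output satisfying those conditions — as $\varphi'$ does by Theorem~\ref{thm:SoundnessGeneral} — is automatically in $\Phi$. In effect, Completeness reduces to: Soundness guarantees $\varphi'$ meets the defining conditions of $\Phi$, and the loop's termination (Step 1) guarantees $\varphi'$ is actually produced rather than the algorithm diverging or returning failure.
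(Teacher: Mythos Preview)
Your proposal is correct and terminates at the right conclusion, but it takes a different route from the paper. You reduce Completeness to Soundness: once Step~1 guarantees termination, you invoke Theorem~\ref{thm:SoundnessGeneral} to certify that the output $\varphi'$ is a minimal feasible repair, and then argue that $\Phi$, read as the full set of minimal repairs, must contain it. The paper instead argues directly, without citing Theorem~\ref{thm:SoundnessGeneral}: it fixes an arbitrary $\phi\in\Phi$ with predicate set $\mathcal{D}$, observes that the algorithm's accumulated diagnosed set $\mathcal{D}'$ (the union of all IIS predicates encountered) necessarily satisfies $\mathcal{D}\subseteq\mathcal{D}'$ --- because any feasibility-restoring repair must touch every IIS --- and then compares $s_{\mathcal{D}}$ and $s_{\mathcal{D}'}$ as slack assignments on the common superset $\mathcal{D}'$, concluding $\|s_{\mathcal{D}'}\|\le\|s_{\mathcal{D}}\|$ from the optimality of~\eqref{eq:slack1}. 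The time-interval case is handled analogously by showing the algorithm's zero-slack intervals dominate those induced by any $\phi\in\Phi$.

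Your modular approach is cleaner and avoids repeating the soundness machinery, but it leaves implicit a point the paper makes explicit: that the algorithm's diagnosed set $\mathcal{D}'$ contains the hypothesized $\mathcal{D}$, which is what justifies comparing the two slack vectors on a common footing and is also what ensures $\varphi'\in\texttt{REPAIR}_{\mathcal{D}',\mathcal{T}'}(\varphi)$ is comparable to elements of $\Phi\subseteq\texttt{REPAIR}_{\mathcal{D},\mathcal{T}}(\varphi)$. You gesture at this in Step~1 (``all of which the algorithm eventually reaches'') but do not isolate it as the structural fact driving the norm comparison. Making that containment explicit would tighten your Step~3 and bring it closer to a self-contained argument.
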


\begin{proof}[(Theorem~\ref{thm:CompletenessGeneral})]
We first observe that Algorithm~\ref{algo:DiagnosisRepair} always terminates with a feasible solution $\varphi'$ since the set of MILP
constraints to diagnose and repair is finite. We first consider the case of predicate repairs.  Let $\mathcal{D}$ be the set of predicates modified to obtain  $\phi \in \Phi$ and $\mathcal{D}'$ the set of diagnosed predicates returned by Algorithm~\ref{algo:DiagnosisRepair}. Then, by Definition~\ref{def:iis} and the iterative approach of Algorithm~\ref{algo:DiagnosisRepair}, we are guaranteed that $\mathcal{D}'$ includes all the predicates responsible for inconsistencies, as also argued in the proof of Theorem~\ref{thm:SoundnessGeneral}. Therefore, we conclude $\mathcal{D} \subseteq \mathcal{D}'$. $s_{\mathcal{D}}$ and $s_{\mathcal{D}'}$ can then be seen as two repair sets on the same predicate set. However, by the solution of Problem~\eqref{eq:slack1}, we are guaranteed that $s_{\mathcal{D}'}$ has minimum norm; then,  $ || s_{\mathcal{D}'} || \leq  || s_{\bar{\mathcal{D}}} ||$ will hold, hence $\varphi' \in \Phi$. 

We now consider the case of time-interval repair. 
If a formula $\phi \in \Phi$ repairs a set of intervals $\mathcal{T} = \{\tau_1,\dotsc,\tau_l \}$, then there exists a set of constraints associated with atomic predicates in $\varphi$ which are consistent in $\mathcal{M}$, the MILP encoding associated with $\phi$, and make the overall problem feasible.  Then, the relaxed MILP encoding $\mathcal{M}'$ associated with $\varphi$ after slack norm minimization will also include a set of predicate constraints admitting zero slacks over the same set of time intervals as in $\mathcal{M}$, as determined by $\mathcal{T}$. Since these constraints are enough to make the entire problem $\mathcal{M}$ feasible, this will also be the case for $\mathcal{M}'$. Therefore, our procedure for time-interval repair terminates and produces a set of non-empty intervals $\mathcal{T}' = \{\tau'_1,\dotsc,\tau_l'\}$. Finally, because Algorithm~\ref{algo:DiagnosisRepair} finds the longest intervals for which the slack variables associated with each atomic predicate are zero, we are also guaranteed that $||\tau'_i || \geq  ||\tau_i ||$ for all $i \in \{1,\ldots, l\}$, as also argued in the proof of Theorem~\ref{thm:SoundnessGeneral}. We can then conclude that $\varphi' \in \Phi$ holds.  
\end{proof}

\ignore{
\begin{proof}[Sketch] 
We first observe that Algorithm~\ref{algo:DiagnosisRepair} always
terminates with a feasible solution $\varphi'$ since the set of  MILP
constraints is finite. Both the results above will then follow from
the minimality of the IIS and the minimization of the slack vector
norm. In fact, at every iteration, the set of predicates
$\mathcal{D}'$ returned by Algorithm~\ref{algo:Diagnosis} is minimal
since it comes from a minimal irreducible inconsistent system of
constraints. Because Algorithm~\ref{algo:Repair} solves
Problem~\eqref{eq:slack1}, minimality of predicate repairs, i.e.,
predicate slack norm, is guaranteed. Similarly, because of norm
minimization, which also encourages sparsity in the introduced slacks,
Algorithm~\ref{algo:Repair} is able to find the maximum-size
(non-empty) interval $\tau^{*}$ for each temporal operator in
$\varphi$, if this exists, thus guaranteeing minimality of time
interval repairs.  
\end{proof}
}

In the worst case, Algorithm~\ref{algo:DiagnosisRepair} solves a number of MILP problem instances equal to the number of atomic predicates in the STL formula. While the complexity of solving a MILP is NP-hard, the actual runtime depends on the size of the MILP, which is linear in the number of predicates and operators in the STL specification.

%

\section{Contracts}
\label{sec:contracts}

In this section, we consider specifications provided in the form of a
contract $(\phie, \phie \rightarrow \phis)$, where $\phie$ is an STL
formula expressing the assumptions, i.e., the set of behaviors assumed
from the environment, while $\phis$ captures the guarantees, i.e., the
behaviors promised by the system in the context of the environment.
To repair contracts, we can capture tradeoffs between assumptions and
guarantees in terms of minimization of a weighted norm of slacks. We
describe below our results in both the cases of non-adversarial and
adversarial environments.  

\subsection{Non-Adversarial Environment}
\label{sec:nonadv}

For a contract, we make a distinction between controlled inputs $u_t$ and uncontrolled (environment) inputs $w_t$ of the dynamical system. In this section we assume that the environment signal $\wH$ can be predicted over a finite horizon and set to a known value for which the controller must be synthesized. With $\varphi \equiv \phie \rightarrow \phis$, equation~\eqref{eq:opt} reduces to: 
\begin{equation}
\begin{aligned}
	\label{eq:opt_min}
	 &\underset{\uH}{\text{minimize}}
	& \quad J(\xi^H(x_0, \uH, \wH))\\
	 &\text{subject to}
	& \quad \xi^H(x_0,\uH,\wH) \models \varphi,
\end{aligned}	
\end{equation}
Because of the similarity of Problem~\eqref{eq:opt_min} and Problem~\eqref{eq:opt_simple}, we can then diagnose and repair a contract using the  methodology illustrated in Section~\ref{sec:diagnosis}. However, to reflect the different structure of the specification, i.e., its partition into assumption and guarantees, we adopt a weighted sum of the slack variables in Algorithm~\ref{algo:DiagnosisRepair}, allocating different weights to predicates in the assumption and guarantee formulas.
We can then provide the same guarantees as in Theorems~\ref{thm:SoundnessGeneral} and~\ref{thm:CompletenessGeneral}, where $\varphi \equiv \varphi_e \rightarrow \varphi_s$ and the minimality conditions are stated with respect to the weighted norm. 

\begin{example}[Non-adversarial Race] \label{ex:sol2}
\sloppypar
We consider Example~\ref{sec:example2} with the same discretization step $\Delta t = 0.2$ and horizon $H=10$ as in Example~\ref{sec:example1}.
The MPC scheme results infeasible at time $0$. In fact, we observe that $\psi_e$ is always true as $v_{0}^{\rm adv} \geq 0.5$ and $\aadvt = 1 \geq 0$ holds at all times. Since $v_{0}^{ego} = 0$, the predicate $\psi_{s2} = \G_{[0, \infty)} (\vegot \geq 0.5)$ in $\psi_{s}$ is found to be failing. As in Section~\ref{sec:repair}, we can modify the conflicting predicates in the specification by using slack variables as follows: 
$\vadvt + s_{e}(t) \geq 0.5$ (assumptions) and $\vegot + s_{s}(t) \geq 0.5$ (guarantees). However, we also assign a set of weights to the assumption ($\lambda_e$) and guarantee ($\lambda_s$) predicates, our objective being $\lambda_e |s_e| + \lambda_s |s_s|$. By setting $\lambda_s > \lambda_e$, 
we encourage modifications in the assumption predicate, thus obtaining $s_e = 0.06$ at time $0$ and zero otherwise, and $s_s = 0$ at all times.  
We can then set $\psi'_e =  (\vadv_0 \geq 0.56)$, which falsifies $\psi_e$ at time $0$, so that $\psi_e \rightarrow \psi_s$ is satisfied over the entire range.  Alternatively, by setting $\lambda_s < \lambda_e$,  we
obtain the slack values in Table~\ref{table:slack3}, which lead to the following predicate repair: $\psi'_{s2} = \G_{[0, \infty)} (\vegot \geq -0.01)$. 

\begin{table}[t]
\begin{center}
\scalebox{0.9}{
\begin{tabular}{ c|c|c|c|c|c|c|c|c|c|c}
 time &  0 & 0.2 & 0.4 & 0.6 & 0.8 & 1 & 1.2 & 1.4 & 1.6 & 1.8\\
  \hline
  $s_{s}$ & 0.51 & 0.31 & 0.11 & 0 & 0 & 0 & 0 & 0 & 0 & 0 \\
\end{tabular}}
\end{center}
\caption{\small{Slack variables used in Example~\ref{sec:example2} and~\ref{ex:sol2}.}}
\label{table:slack3}
\end{table}

\begin{figure}[t]
\centering
\includegraphics[width=0.3\textwidth]{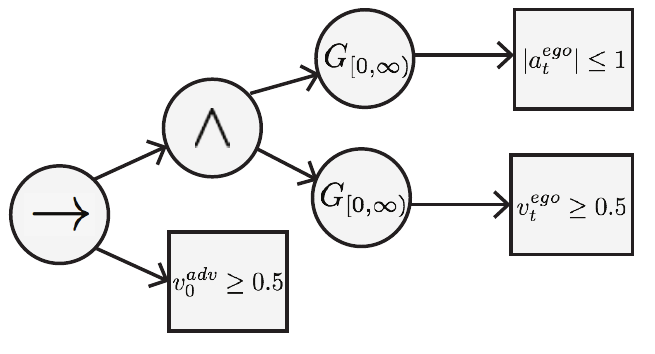}	
\caption{\small{Parse tree of $\psi \equiv \psi_e \rightarrow \psi_s$ used in Example~\ref{sec:example2} and~\ref{ex:sol2}.}
}
\label{fig:parseTree}
\end{figure}

We can also modify the time interval of the temporal operator associated with $\psi_{s2}$ to repair the overall specification. To do so, Algorithm~\ref{algo:DiagnosisRepair} uses the parse tree of $\psi_e \rightarrow \psi_s$ in Figure~\ref{fig:parseTree}. For any of the leaf node predicates $\mu_i$, $i\in \{1,2,3\}$,  
we get a support $\sigma_i = [0, 9]$, which is only limited by the finite horizon $H$.  
Then, based on the slack values in Table~\ref{table:slack3}, we can conclude $\sigma'_1 = \sigma'_2 = [0, 9]$ (the optimal slack values for these predicates are always zero), while
$\sigma'_3 = [3, 9]$. For the given syntax tree, we also have $\sigma^{*}_1 = \sigma'_1$, $\sigma^{*}_2 = \sigma'_2$, and  $\sigma^{*}_3 = \sigma'_3$ for the temporal operator nodes that are parent nodes of $\mu_1$, $\mu_2$, and $\mu_3$, respectively. Since none of the above intervals is empty, a time interval repair is indeed possible 
%
by modifying the time interval of the parent node of $\mu_3$, thus achieving $\tau^*_3 = \sigma^*_3.$ This leads to the following proposed sub-formula $\psi'_{s2} = \G_{[0.6, \infty)} (\vegot \geq 0.5)$. In this example, repairing the specification over the first horizon is enough to guarantee controller realizability in the future. We can then keep the upper bound of the $\G$ operator to infinity.
\end{example}

\subsection{Adversarial Environment}
\label{sec:adv}
\sloppypar
When the environment can behave adversarially,  
the control synthesis problem assumes the structure in~\eqref{eq:opt}. Specifically, in this paper,  
we allow $w_t$ to lie in  an interval $[w_{\text{min}}, w_{\text{max}}]$ at all times; this corresponds to the 
STL formula $\varphi_{w} = \G_{[0,\infty)} ({w}_{\text{min}} \leq w_t \leq {w}_{\text{max}})$.    
We decompose a specification $\varphi$ of the form $\varphi_w \land \varphi_e \rightarrow \varphi_s$, representing the contract, as $\varphi \equiv \varphi_{w} \rightarrow \psi$, where $\psi \equiv (\varphi_e \rightarrow \varphi_s)$. Our diagnosis and repair method is summarized in Algorithm~\ref{algo:DiagnosisRepairAdversarial}.

\begin{algorithm}[t]
\caption{\small{{\tt DiagnoseRepairAdversarial}}}\label{algo:DiagnosisRepairAdversarial}
\scriptsize
\begin{algorithmic}[1]
\medskip
\Procedure{{\tt DiagnoseRepairAdversarial}}{}
  \vspace{1mm}
  \State {\textbf{Input:} $\mathcal{P}$}
  \vspace{1mm}
  \State {\textbf{Output: } $\mathbf{u}^H$, $\mathcal{P}'$}
  \vspace{1mm}
  \State {$ (J,C) \gets$ {\tt GenMILP($\mathcal{P}$)}}
     \vspace{1mm}
  \State {$(\mathbf{u}_0^H, \wH_0, sat) \gets$ {\tt CheckSAT($J,C$)}}
  \vspace{1mm} 
  \If {$sat$} 
  \vspace{1mm}
  \State {$ \mathcal{W}^*_{cand} \gets$ {\tt SolveCEGIS($\uH_0, \mathcal{P}$)}}
  \vspace{1mm}
  \State {$ \mathcal{W}_{cand} \gets \mathcal{W}^*_{cand}$}
  \vspace{1mm}
  \While {$  \mathcal{W}_{cand} \neq \emptyset$}
    \vspace{1mm}
  \State {${\mathcal{P}}_w \gets$ {\tt RepairAdversarial($ \mathcal{W}_{cand}, \mathcal{P}$)}}
  \vspace{1mm}
   \State {$ \mathcal{W}_{cand} \gets$ {\tt SolveCEGIS($\uH_0, {\mathcal{P}}_w$)}}
   \vspace{1mm}
  \EndWhile
   \State {$ \mathcal{W}_{cand} \gets \mathcal{W}^*_{cand}$, $ \mathcal{P}_{\psi} \gets \mathcal{P}$}
  \vspace{1mm}
  \While {$  \mathcal{W}_{cand} \neq \emptyset$}
    \vspace{1mm}
  \State {${\mathcal{P}}_{\psi} \gets$ {\tt DiagnoseRepair($\mathcal{P}_{\psi}$)}}
  \vspace{1mm}
   \State {$ \mathcal{W}_{cand} \gets$ {\tt SolveCEGIS($\uH_0, {\mathcal{P}}_{\psi}$)}}
   \vspace{1mm}
  \EndWhile
  \State {$\mathcal{P}' \gets$ {\tt FindMin(${\mathcal{P}}_{w}, {\mathcal{P}}_{\psi}$)}}
   \EndIf
\EndProcedure 
\medskip
\end{algorithmic}
\end{algorithm}

We first check the satisfiability of the control synthesis problem by examining whether there exists a pair of $\uH$ and $\wH$ for which problem~\eqref{eq:opt} is feasible (\texttt{CheckSAT} routine):
\begin{equation}
\label{eq:opt_sat}
\begin{aligned}
	 &\underset{\uH,\wH}{\text{minimize}}
	& \quad J(\xi^H(x_0, \uH, \wH))\\
	 &\text{subject to}
	& \quad \xi^H(x_0,\uH, \wH) \models \varphi\\
	& & \quad \wH \models \varphi_w \land \varphi_e.\\
\end{aligned}	
\end{equation}
If problem~\eqref{eq:opt_sat} is unsatisfiable, we can use the techniques introduced in Section~\ref{sec:repair} and~\ref{sec:nonadv} to diagnose and repair the infeasibility. Therefore, in the following, we assume that \eqref{eq:opt_sat} is satisfiable, hence there exist $\uH_0$ and $\wH_0$ that solve~\eqref{eq:opt_sat}. 
To check realizability, we use the following CEGIS loop (\texttt{SolveCEGIS} routine). By first fixing the control trajectory to $\uH_0$, we find the worst case disturbance trajectory $\wH_1$ that minimizes the robustness value of $\varphi$ by solving the following problem:
\begin{equation}
\label{eq:opt_cegis2}
\begin{aligned}
	 &\underset{\wH}{\text{minimize}}
	& \quad \rho^\varphi(\xi^H(x_0, \uH, \wH),0)\\
	 &\text{subject to}
	& \quad  \wH \models \phie \wedge \varphi_w
\end{aligned}	
\end{equation}
with  $\uH = \uH_0$. 
%
The optimal $\wH_1$ from~\eqref{eq:opt_cegis2} will falsify the specification if the resulting robustness value is below zero\footnote{\small{A tolerance $\rho_{min}$ can be selected to accommodate approximation errors, i.e., $\rho^\varphi(\xi^H(x_0, \uH_0, \wH_1),0) < \rho_{min}$.}}.
If this is the case, we look for a $\uH_1$ which solves \eqref{eq:opt_min} with the additional restriction of $\wH \in {\cal W}_{cand} = \{\wH_1\}$. If this step is feasible, we once again attempt to find a worst-case disturbance sequence $\wH_2$ that solves~\eqref{eq:opt_cegis2} with $\uH = \uH_1$: this is the counterexample-guided inductive step. At each iteration $i$ of this CEGIS loop, the set of candidate disturbance sequences ${\cal W}_{cand}$ expands to include $\wH_i$. If the loop terminates at iteration $i$ with a successful $\uH_i$ (one for which the worst case disturbance $\wH_i$ in \eqref{eq:opt_cegis2} has positive robustness), we conclude that the formula $\varphi$ is realizable. 

The CEGIS loop may not terminate if the set ${\cal W}_{cand}$ is infinite. We, therefore, run it for a maximum number of iterations. 
If \texttt{SolveCEGIS} fails to find a controller sequence prior to the timeout, then \eqref{eq:opt_min} is infeasible for the current ${\cal W}_{cand}$,  i.e., there is no control input that can satisfy $\varphi$ for all disturbances in ${\cal W}_{cand}$. We conclude that the specification is not realizable (or, equivalently, the contract is inconsistent). 
While this infeasibility can be repaired by modifying $\psi$ based on the techniques in Section~\ref{sec:repair} and~\ref{sec:nonadv}, an alternative solution 
is to repair $\varphi_w$ by minimally pruning the bounds on $w_t$ (\texttt{RepairAdversarial} routine). 

To do so, given a small tolerance $\epsilon \in \mathbb{R}^+$, we find 
\begin{equation} \label{eq:opt_real}
	w_u = \max_{\substack{\wH_i \in {\cal W}_{cand} \\ t \in \{1,\ldots,H-1\}}} w_{i,t} \qquad
	w_l = \min_{\substack{\wH_i \in {\cal W}_{cand} \\ t \in \{1,\ldots,H-1\}}} w_{i,t}
\end{equation}
and define $s_u = w_{\max} - w_u$ and  $s_l = w_l - w_{\min}$. We then use $s_u$ and $s_l$ to update the range for $w_t$  in $\varphi_w$ to a maximal interval $[w'_{\min}, w'_{\max}] \subseteq [w_{\min}, w_{\max}]$ and such that at least one $\wH_i \in {\cal W}_{cand}$ is excluded. Specifically, if $s_u \leq s_l$, we set $[w'_{\min}, w'_{\max}] = [w_{\min}, w_u - \epsilon]$; otherwise we set $[w'_{\min}, w'_{\max}] = [w_l + \epsilon, w_{\max}]$. The smaller the value of $\epsilon$, the larger the resulting interval. 
%
%
Finally, we use the updated formula $\varphi'_w$ to run \texttt{SolveCEGIS} again until a realizable control sequence $\mathbf{u}^{H}$ is found. 
%
In Algorithm~\ref{algo:DiagnosisRepairAdversarial}, assuming a predicate repair procedure, \texttt{FindMin} provides the solution with minimum slack norm between the ones repairing $\psi$ and $\varphi_w$.
\begin{example}[Adversarial Race]
\sloppypar
We consider the specification in Example~\ref{sec:example3}. 
%
For the same horizon as in the previous examples, 
after solving the satisfiability problem, for the fixed $\uH_0$, the CEGIS loop returns $\aadvt = 2$ for all $t \in \{0,\dotsc,H-1\}$ as the single 
element in ${\cal W}_{cand}$ for which no controller sequence can be found. 
We then choose to tighten the environment assumptions to make the controller realizable, by shrinking the bounds on $\aadvt$ by using 
Algorithm~\ref{algo:DiagnosisRepairAdversarial} with $\epsilon = 0.01$. After a few iterations, we finally obtain $w'_{\min} = 0$ and $w'_{\max}=1.24$, and therefore  
	$\phi'_w = \G_{[0,\infty)} \big( 0 \leq \aadvt \leq 1.22 \big).$
\end{example}

Under the assumption that \texttt{SolveCEGIS} terminates before reaching the maximum number of iterations\footnote{If this is not the case, then Algorithm~\ref{algo:DiagnosisRepairAdversarial} terminates with \texttt{UNKNOWN}.}, and within the selected tolerance $\epsilon$, the following theorems state the properties of Algorithm~\ref{algo:DiagnosisRepairAdversarial}. 

\begin{theorem}[Soundness] \label{thm:SoundnessContract}
Given a controller synthesis problem $\P = (f_d, g_d, x_0, \varphi, J)$, such that~\eqref{eq:opt} is infeasible at time $t$, let $\varphi' \in \texttt{REPAIR}_{\mathcal{D},\mathcal{T}}(\varphi)$ be the repaired formula returned from Algorithm~\ref{algo:DiagnosisRepairAdversarial} without human intervention,
for a given set of predicates $\mathcal{D}$ or time interval $\mathcal{T}$. Then, $\P' = (f_d, g_d, x_0, \varphi', J)$ is feasible at time $t$ and ($\varphi'$, $\mathcal{D}$, $\mathcal{T}$) satisfy the minimality conditions in Problem~\ref{prob:repair-contract}.
\end{theorem}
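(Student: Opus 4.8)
The plan is to reduce the adversarial case to the non-adversarial soundness result (Theorem~\ref{thm:SoundnessGeneral}, extended in Section~\ref{sec:nonadv} to contracts) by a case analysis on which branch of Algorithm~\ref{algo:DiagnosisRepairAdversarial} produces $\varphi'$. The algorithm first runs \texttt{CheckSAT} on~\eqref{eq:opt_sat}; if this is unsatisfiable, the repair is performed by \texttt{DiagnoseRepair} applied to the feasibility problem, and soundness follows verbatim from Theorem~\ref{thm:SoundnessGeneral} (with weighted norm as in Section~\ref{sec:nonadv}). So the substance of the proof is the case in which~\eqref{eq:opt_sat} is satisfiable and the CEGIS loop \texttt{SolveCEGIS} fails to find a controller within the iteration bound, exposing a finite candidate set ${\cal W}_{cand}$ of disturbance sequences for which~\eqref{eq:opt_min} is infeasible. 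First I would argue that for a \emph{fixed} finite ${\cal W}_{cand}$, the constraint $\forall\, \wH\in{\cal W}_{cand}\ \xi^H(x_0,\uH,\wH)\models\varphi$ unrolls to a single MILP of exactly the form~\eqref{opt:plain-diagnosis} — one copy of the STL/dynamics constraints per element of ${\cal W}_{cand}$ — so the machinery of Section~\ref{sec:diagnosis} applies directly.

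Next I would split on the two repair strategies \texttt{FindMin} chooses between. For the $\psi$-repair branch, the call \texttt{DiagnoseRepair}($\mathcal{P}_\psi$) inside the second \texttt{while} loop is exactly Algorithm~\ref{algo:DiagnosisRepair} applied to the unrolled MILP, so by Theorem~\ref{thm:SoundnessGeneral} each iterate makes that MILP feasible and satisfies the (weighted) predicate/time-interval minimality conditions; the loop re-runs \texttt{SolveCEGIS} and only exits when ${\cal W}_{cand}=\emptyset$, i.e.\ when no falsifying disturbance remains, which is precisely realizability of the repaired contract, hence feasibility of~\eqref{eq:opt} at time $t$ for $\varphi'\equiv\varphi_w\to(\varphi_e\to\varphi_s')$. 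For the $\varphi_w$-repair branch (\texttt{RepairAdversarial}), I would show that the pruned interval $[w'_{\min},w'_{\max}]$ defined via $w_u,w_l$ in~\eqref{eq:opt_real} and the tolerance $\epsilon$ is (i) nonempty and contained in $[w_{\min},w_{\max}]$, and (ii) excludes at least one $\wH_i\in{\cal W}_{cand}$ — this is immediate from $s_u=w_{\max}-w_u$, $s_l=w_l-w_{\min}$ and the choice rule $s_u\le s_l$. Since each pass strictly removes a candidate disturbance and ${\cal W}_{cand}$ is finite, the first \texttt{while} loop terminates; when it does, no element of the (repaired-assumption) disturbance set falsifies $\varphi$, so~\eqref{eq:opt} is feasible. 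Minimality here is the interval-minimality of Problem~\ref{prob:repair-contract}: the interval is pruned by the least amount ($\min(s_u,s_l)$, up to $\epsilon$) needed to drop a conflicting disturbance, and making it any larger reinstates an already-falsifying $\wH_i$.

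Finally, \texttt{FindMin} returns whichever of the two repaired problems has the smaller (weighted) slack norm, and since both branches individually yield feasible problems satisfying the respective minimality conditions, so does their minimum; that gives the full statement. The main obstacle I anticipate is the minimality argument across the CEGIS iterations: the non-adversarial theorems give minimality relative to the IIS/slack-norm of a \emph{single} unrolled MILP, but ${\cal W}_{cand}$ grows between outer iterations, so I need to argue that diagnosing-and-repairing against the final (largest) ${\cal W}_{cand}$ dominates the intermediate repairs — i.e.\ that the union of IISs over the expanding family of unrollings still captures all predicates responsible for infeasibility, exactly as in the inductive argument of Theorem~\ref{thm:SoundnessGeneral}, and that a repair making the problem feasible for the final ${\cal W}_{cand}$ is feasible for all earlier ones (monotonicity, since earlier ${\cal W}_{cand}$ are subsets). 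A secondary subtlety is that soundness is stated only under the standing assumption that \texttt{SolveCEGIS} terminates before timeout and within tolerance $\epsilon$; I would make explicit where each of these hypotheses is used (termination for well-definedness of the outer loops; $\epsilon$ for nonemptiness of the pruned interval).
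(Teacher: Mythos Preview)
Your proposal takes essentially the same approach as the paper: case-split on whether the repair modifies $\psi=\varphi_e\to\varphi_s$ (reducing to Theorem~\ref{thm:SoundnessGeneral} plus termination of the CEGIS loop) or $\varphi_w$ (using \texttt{RepairAdversarial} and~\eqref{eq:opt_real} to argue the pruned interval is maximal up to $\epsilon$), then select via \texttt{FindMin}. Your version is in fact more detailed than the paper's short proof---the explicit unrolling of a finite $\mathcal{W}_{cand}$ into a single MILP and the minimality-across-growing-$\mathcal{W}_{cand}$ discussion are subtleties the paper elides.
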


\begin{proof}[(Theorem~\ref{thm:SoundnessContract})]
We recall that $\varphi \equiv \varphi_w \rightarrow \psi$. Moreover, Algorithm~\ref{algo:DiagnosisRepairAdversarial} provides the solution with minimum slack norm between the ones repairing $\psi$ and $\phi_w$ in the case of predicate repair. Then, when $\psi = \varphi_e \rightarrow \varphi_s$ is modified using Algorithm~\ref{algo:DiagnosisRepair}, soundness is guaranteed by Theorem~\ref{thm:SoundnessGeneral} and the termination of the CEGIS loop. On the other hand, assume Algorithm~\ref{algo:DiagnosisRepairAdversarial} modifies the atomic predicates in 
$\phi_w$. Then, the \texttt{RepairArdversarial} routine and~\eqref{eq:opt_real}, together with the termination of the CEGIS loop, assure that $\varphi_w$ is also repaired in such a way that the controller is realizable, and the length of the bounding box around $w_t$ is maximal within an error bounded by $\epsilon$ (i.e., it differs from the maximal interval length by at most $\epsilon$), which concludes our proof.   
%
\end{proof}

\begin{theorem}[Completeness] \label{thm:CompletenessContract}
\sloppypar
Assume the controller synthesis problem $\P = (f_d, g_d, x_0, \varphi, J)$ results in~\eqref{eq:opt} being infeasible at time $t$.
If there exist a set of predicates $\mathcal{D}$ and time-intervals $\mathcal{T}$ such that there exists $\Phi \subseteq \texttt{REPAIR}_{\mathcal{D},\mathcal{T}}(\varphi)$ for which $\forall \ \phi \in \Phi$, $\P' = (f_d, g_d, x_0, \phi, J)$  is feasible at time $t$ and ($\phi$, $\mathcal{D}$, $\mathcal{T}$) are minimal in the sense of Problem~\ref{prob:repair-contract}, then Algorithm~\ref{algo:DiagnosisRepairAdversarial} returns a repaired formula $\varphi'$ in $\Phi$.  
\end{theorem}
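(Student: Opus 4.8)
The plan is to mirror the proof of Theorem~\ref{thm:CompletenessGeneral}, adding the two ingredients specific to the adversarial contract setting: the assumed termination of the CEGIS loop, and the soundness result of Theorem~\ref{thm:SoundnessContract}. First I would establish termination of Algorithm~\ref{algo:DiagnosisRepairAdversarial}. Under the standing assumption that \texttt{SolveCEGIS} halts before the timeout, the $\varphi_w$-repair loop terminates because each invocation of \texttt{RepairAdversarial} strictly shrinks the admissible box $[w_{\min},w_{\max}]$, excluding at least one candidate $\wH_i \in \mathcal{W}_{cand}$ with margin $\epsilon$ via~\eqref{eq:opt_real}; and the $\psi$-repair loop terminates because each call to \texttt{DiagnoseRepair} terminates (finitely many atomic predicates, exactly as in Theorem~\ref{thm:CompletenessGeneral}) and the CEGIS loop wrapped around it halts by hypothesis. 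Hence \texttt{FindMin} is reached and some $\varphi'$ is returned; its feasibility and minimality in the sense of Problem~\ref{prob:repair-contract} are already guaranteed by Theorem~\ref{thm:SoundnessContract}, so the only thing left to prove is membership $\varphi' \in \Phi$.

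Next I would split on the type of repair realizing $\Phi$, using the decomposition $\varphi \equiv \varphi_w \rightarrow \psi$ with $\psi \equiv (\varphi_e \rightarrow \varphi_s)$ from Section~\ref{sec:adv}. By Problem~\ref{prob:repair-contract}, a member $\phi \in \Phi$ is obtained either (i) by a predicate/time-interval repair acting on $\psi$ (equivalently on $\varphi_e$ or $\varphi_s$), or (ii) by pruning the bounds of $\varphi_w$. Since \texttt{FindMin} returns the minimum-slack-norm option between the $\psi$-repaired problem $\mathcal{P}_\psi$ and the $\varphi_w$-repaired problem $\mathcal{P}_w$, it suffices to show that each branch is complete for its own case. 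For case (i), the $\psi$-repair loop runs \texttt{DiagnoseRepair} on $\mathcal{P}_\psi$ with a fixed, finite candidate disturbance set $\mathcal{W}_{cand}$, i.e.\ exactly the monolithic MPC setting of Section~\ref{sec:diagnosis}; by Theorem~\ref{thm:CompletenessGeneral}, each such call returns a repair minimal with respect to the current $\mathcal{W}_{cand}$. Because $\phi$ makes~\eqref{eq:opt} feasible for \emph{all} admissible disturbances, in particular for every $\wH_i$ ever added to $\mathcal{W}_{cand}$, the restriction of $\phi$ to those disturbances is a valid repair of each restricted problem, so the minimal repair the algorithm finds has slack norm no larger than that induced by $\phi$ at every iteration; when \texttt{SolveCEGIS} finally certifies realizability, the accumulated repair of $\psi$ makes the full problem~\eqref{eq:opt} feasible, and together with the minimality of $\phi$ its slack norm equals that of $\phi$, so the returned formula lies in $\Phi$. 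This is the same $\mathcal{D} \subseteq \mathcal{D}'$-plus-norm-minimality argument as in the proof of Theorem~\ref{thm:CompletenessGeneral}. For case (ii), if $\phi$ prunes $\varphi_w$ to a maximal realizable box $[w'_{\min},w'_{\max}] \subseteq [w_{\min},w_{\max}]$, then by~\eqref{eq:opt_real} each round of \texttt{RepairAdversarial} removes only the thinnest slab excluding some offending $\wH_i$, so the box produced when \texttt{SolveCEGIS} first succeeds differs in length from the maximal realizable one by at most $\epsilon$, which is precisely the tolerance-relaxed minimality of Problem~\ref{prob:repair-contract}; hence $\varphi' = (\varphi'_w \rightarrow \psi) \in \Phi$.

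The main obstacle is the reasoning around the CEGIS under-approximation: the repair is computed against a finite $\mathcal{W}_{cand}$ that need not equal the true (possibly infinite) adversarial set $\mathcal{W}^e$, so one must show that a repair minimal for $\mathcal{W}_{cand}$ stays valid, and once the loop closes with no new counterexample becomes minimal, for $\mathcal{W}^e$. The crux is a monotonicity observation: enlarging $\mathcal{W}_{cand}$ can only increase the minimal required slack norm (it only adds constraints to~\eqref{eq:slack1}), while any $\phi \in \Phi$ furnishes a uniform upper bound on that norm; since the CEGIS loop terminates, the increasing sequence of required norms is dominated by the norm induced by $\phi$ and the two coincide at termination, yielding $\varphi' \in \Phi$. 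The remaining work -- the $\epsilon$-bookkeeping in the $\varphi_w$ branch and the case analysis inside \texttt{FindMin} when both branches succeed (where one simply observes that taking the smaller of two norms cannot exceed either, and both are at most the optimal $\|\phi\|$) -- is routine.
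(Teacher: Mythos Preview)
Your proposal is correct and follows essentially the same two-case decomposition as the paper: invoke Theorem~\ref{thm:CompletenessGeneral} plus CEGIS termination for repairs to $\psi$, and use the $\epsilon$-maximal-interval argument via~\eqref{eq:opt_real} and \texttt{RepairAdversarial} for repairs to $\varphi_w$. Your treatment is considerably more detailed than the paper's brief sketch---in particular, the explicit termination argument, the monotonicity observation about enlarging $\mathcal{W}_{cand}$, and the \texttt{FindMin} bookkeeping are all elaborations the paper omits---but the underlying structure is the same.
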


\begin{proof}[(Theorem~\ref{thm:CompletenessContract})]
%
As discussed in the proof of Theorem~\ref{thm:SoundnessContract}, if Algorithm~\ref{algo:DiagnosisRepairAdversarial}  
modifies $\psi = \varphi_e \rightarrow \varphi_s$ using Algorithm~\ref{algo:DiagnosisRepair}, completeness is guaranteed by Theorem~~\ref{thm:CompletenessGeneral} and the termination of the CEGIS loop.
On the other hand, let us assume there exists a minimum norm repair for the atomic predicates of $\varphi_w$, which returns a maximal interval $[w'_{\min}, w'_{\max}] \subseteq [w_{\min}, w_{\max}]$. Then, given the termination of the CEGIS loop, by repeatedly applying \eqref{eq:opt_real} and \texttt{RepairAdversarial}, it is also possible to produce a predicate repair such that the corresponding interval $[w''_{\min}, w''_{\max}]$ makes the control synthesis realizable and is maximal within an error bounded by $\epsilon$ (i.e., its length differs by at most $\epsilon$ from the one of the maximal interval $[w'_{\min}, w'_{\max}]$). Hence, $\varphi' \in \Phi$ holds.
\end{proof}

\ignore{
\begin{proof}[Sketch]
We recall that $\varphi \equiv \varphi_w \rightarrow \psi$. Then, both the results follow from Theorem~\ref{thm:SoundnessGeneral} and~\ref{thm:CompletenessGeneral} as well as the minimization problem~\eqref{eq:opt_real}. In fact, the minimality conditions for a repair on $\psi = \varphi_e \rightarrow \varphi_s$ using Algorithm~\ref{algo:DiagnosisRepair} are guaranteed by Theorem~\ref{thm:SoundnessGeneral} and~\ref{thm:CompletenessGeneral}. On the other hand, problem~\eqref{eq:opt_real} assures that $\varphi_w$ is also repaired in such a way that any reduction in the bounding box around $w_t$ is minimal, hence the range of allowed values is maximal.  
\end{proof}
}

\section{Case Studies}
\label{sec:examples}
\subsection{Autonomous Driving}
\label{sec:autonomouDriving}
We consider the problem of synthesizing a controller for an autonomous vehicle in a city driving scenario. We analyze the following two tasks:
(i) changing lanes on a busy road; 
(ii) performing an unprotected left turn at a signalized intersection.	
We use a simple point-mass model for the vehicles on the road.  
For each vehicle, we define the state as $\x = [x \enskip y \enskip \theta \enskip v]^{\top}$, where $x$ and $y$ denote the coordinates, and $\theta$ and $v$ represent the direction and speed, respectively.
Let $\uu = [u_1 \enskip u_2]^{\top}$ be the control input for each vehicle, where $u_1$ is the steering input and $u_2$ is the acceleration. Then, the vehicle's state evolves according to the following dynamics:
\begin{equation}
\openup -1\jot
\begin{aligned}
	&\dot{x} = v \cos\theta\\
	&\dot{y} = v \sin\theta \\
	&\dot{\theta} = {v \cdot u_1}/{m}\\
	&\dot{v} = u_2,
\end{aligned}	
\end{equation}
where $m$ is the vehicle mass. To determine the control strategy, we linearize the overall system dynamics around the initial state at each run of the MPC, which is completed in less than 2~s on a 2.3-GHz Intel Core i7 processor with 16-GB memory.
%
%
We further impose the following constraints on the \emph{ego} vehicle (i.e., the vehicle under control): (i) a minimum distance must be established  between the \emph{ego} vehicle and other cars on the road to avoid collisions; (ii) the \emph{ego} vehicle must obey the traffic lights;
(iii) the \emph{ego} vehicle must stay within its road boundaries.

\subsubsection{Lane Change}
We consider a lane change scenario on a busy road as shown in Fig.~\ref{fig:MergeLeftInfeas}. The \emph{ego} vehicle is in red. 
 \emph{Car 1} is at the back of the left lane, \emph{Car 2} is in the front of the left lane, while \emph{Car 3} is on the right lane.
The states of the vehicles are initialized as follows: 
 $x_0^{\text{Car 1}} = [-0.2\enskip -1.5 \enskip \frac{\pi}{2} \enskip 0.5]^\top$, $x_0^{\text{Car 2}} = [-0.2 \enskip 1.5 \enskip \frac{\pi}{2} \enskip 0.5]^\top$, $x_0^{\text{Car 3}} = [0.2 \enskip 1.5 \enskip \frac{\pi}{2} \enskip 0]^\top$, and $x_0^{\text{ego}} = [0.2 \enskip -0.7 \enskip \frac{\pi}{2}\enskip 0]^\top$. The control inputs for \emph{ego} and \emph{Car 3} are initialized at $[0 \enskip 0]^\top$; the ones for \emph{Car 1} and \emph{Car 2} are set to $u_0^{\text{Car 1}} = [0 \enskip1]^\top$ and $u_0^{\text{Car 2}} = [0\enskip -0.25]^\top$.
The objective of \emph{ego} is to safely change lane, while satisfying the following requirements:
\begin{equation}
\label{eq:CarSTL}
\begin{aligned}
	&\varphi_{\text{str}} = \G_{[0,\infty)}(|u_{1}| \leq 2)& \quad \text{\small{Steering Bounds}}\\
	&\varphi_{\text{acc}}= \G_{[0,\infty)}(|u_{2}| \leq 1)&\quad \text{\small{Acceleration Bounds}} \\
	&\varphi_{\text{vel}} = \G_{[0,\infty)}(|v| \leq 1)& \quad \text{\small{Velocity Bounds}}
\end{aligned}	
\end{equation}
The solid blue line in Fig.~\ref{fig:MergeLeft} is the trajectory of \emph{ego} as obtained from our MPC scheme, while the dotted green line is the future trajectory pre-computed for a given horizon at a given time. MPC becomes infeasible at time $t=1.2$~s when the no-collision requirement is violated, and a possible collision is detected between the \emph{ego} vehicle and \emph{Car 1} before the lane change is completed (Fig.~\ref{fig:MergeLeft}a). Our solver takes $2$~s, out of which $1.4$~s are needed to generate all the IISs, consisting of $39$ constraints. To make the system feasible, the proposed repair increases both the acceleration bounds and the velocity bounds on the \emph{ego} vehicle as follows:
\begin{equation}
\begin{aligned}
 \varphi_{\text{acc}}^{\text{new}} = \G_{[0,\infty)}(|u_{2}| \leq 3.5) \\
 \varphi_{\text{vel}}^{\text{new}} = \G_{[0,\infty)}(|v| \leq 1.54)	
\end{aligned}
\end{equation}
When replacing the initial requirements $\varphi_{\text{acc}}$ and $\varphi_{\text{vel}}$ with the modified ones, the revised MPC scheme  allows the vehicle to travel faster and safely complete a lane change maneuver, without risks of collision, as shown in Fig.~\ref{fig:MergeLeftFeas}. 

\begin{figure}
\centering
\begin{subfigure}[b]{0.22\textwidth}
\centering
	\includegraphics[width = \textwidth]{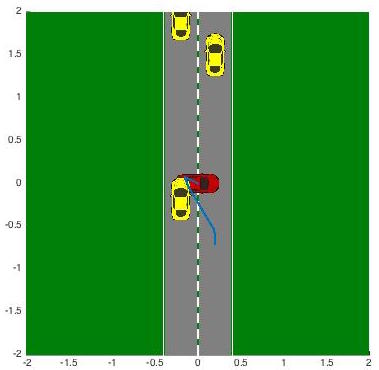}
	\caption{}
	\label{fig:MergeLeftInfeas}
\end{subfigure}%
\hspace{12pt}
\begin{subfigure}[b]{0.22\textwidth}
\centering
	\includegraphics[width = \textwidth]{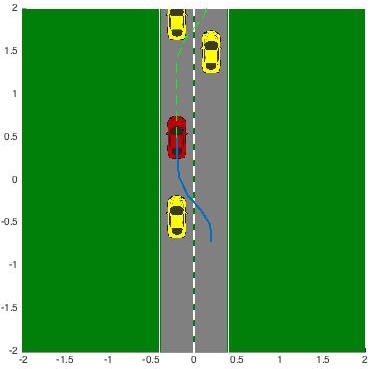}
	\caption{}
	\label{fig:MergeLeftFeas}
\end{subfigure}
\caption{\small{Changing lane is infeasible at $t=1.2$~s in (a) and gets repaired in (b).}}
\label{fig:MergeLeft}
\end{figure}

\subsubsection{Unprotected Left Turn}
In the second scenario, we would like the \emph{ego} vehicle to perform an unprotected left turn at a signalized intersection, where the \emph{ego} vehicle has a green light and is supposed to yield to oncoming traffic, represented by the yellow cars crossing the intersection in Fig.~\ref{fig:LeftTurn}. 
The environment vehicles are initialized at the states $x_0^{\text{Car 1}} = [-0.2\enskip 0.7 \enskip -\frac{\pi}{2} \enskip 0.5]^\top$ and $x_0^{\text{Car 2}} = [-0.2 \enskip 1.5 \enskip -\frac{\pi}{2} \enskip 0.5]^\top$, while the \emph{ego} vehicle is initialized at $x_0^{\text{ego}} = [0.2 \enskip -0.7 \enskip \frac{\pi}{2}\enskip 0]^\top$. The control input for each vehicle is initialized at $[0 \enskip 0]^\top$. Moreover, we use the same bounds as in~\eqref{eq:CarSTL}.
\begin{figure}
\centering
\begin{subfigure}[b]{0.22\textwidth}
\centering
	\includegraphics[width = \textwidth]{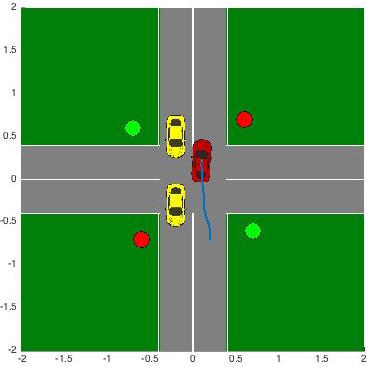}
	\caption{}
	\label{fig:LeftTurnInfeas}
\end{subfigure}%
\hspace{10pt}
\begin{subfigure}[b]{0.22\textwidth}
\centering
	\includegraphics[width = \textwidth]{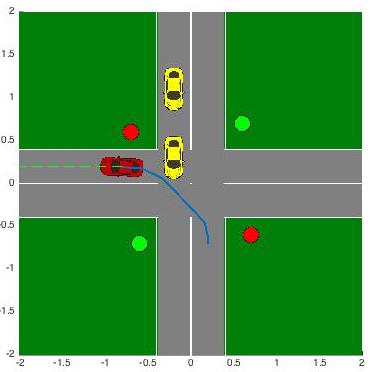}
	\caption{}
	\label{fig:LeftTurnFeas}
\end{subfigure}
\caption{\small{Left turn becomes infeasible at time $t=2.1$ s in (a) and is repaired in (b).}}
\label{fig:LeftTurn}
\end{figure}

The MPC scheme becomes infeasible at $t=2.1$~s. The solver takes $5$~s, out of which $2.2$~s are used to generate the IISs, including 56 constraints. As shown in Fig.~\ref{fig:LeftTurnInfeas}, the \emph{ego} vehicle yields in the middle of intersection for the oncoming traffic to pass. However, the traffic signal turns red in the meanwhile, and there is no feasible control input for the \emph{ego} vehicle without breaking the traffic light rules. 
Since we do not allow modifications to the traffic light rules, the original specification is repaired again by increasing the bounds on  acceleration and velocity, thus obtaining:
%
\begin{equation}
\begin{aligned}
	\varphi^{\text{new}}_{\text{acc}}  &=&  \G_{[0,\infty)}\big (|u_{2}| \leq 11.903 \big )\\	
	\varphi^{\text{new}}_{\text{vel}}  &=&  \G_{[0,\infty)}\big(|v| \leq 2.42\big )
\end{aligned}	
\end{equation}
As shown by the trajectory in Fig.~\ref{fig:LeftTurnFeas}, under the assumptions and initial conditions of our scenario, higher allowed velocity and acceleration make the \emph{ego} vehicle turn before the oncoming cars get close or cross the intersection.



\subsection{Aircraft Electric Power System}

%

\begin{figure}[t]
	 \centering
	\begin{tabular}{cc}
	\includegraphics[scale = 0.6, angle = 90]{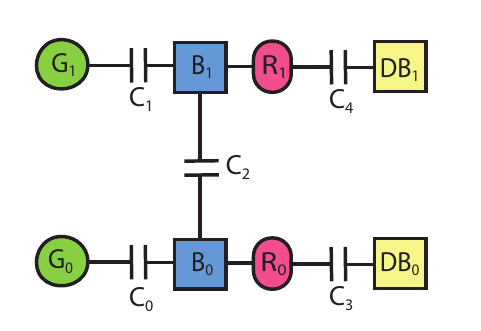} &
	\includegraphics[scale = 0.6]{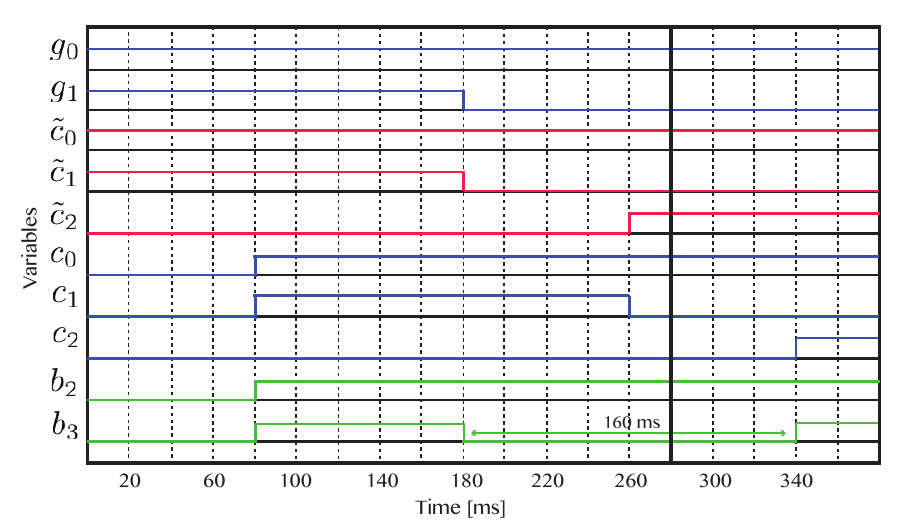}
	\end{tabular}
	\caption{\small{Simplified model of an aircraft electric power system (left) and counterexample trajectory (right). The blue, green and red lines represent environment, state, and controller variables, respectively, for a  380-ms run.}}
	\label{fig:EPS}
\end{figure}

Fig.~\ref{fig:EPS} shows a simplified architecture for the primary power distribution system in a passenger aircraft~\cite{Nuzzo14}.
Two power sources, the left and right generators $G_0$ and $G_1$, deliver power to a set of high-voltage AC and DC buses ($B_0$, $B_1$, $DB_0$, and $DB_1$) and their loads. 
AC power from the generators is converted to DC power by rectifier units ($R_1$ and $R_2$). A bus power control unit (controller) monitors the availability of power sources and configures a set of electromechanical switches, denoted as contactors ($C_0, \ldots, C_4$), such that essential buses remain powered even in the presence of failures, while satisfying a set of safety, reliability, and real-time performance requirements~\cite{Nuzzo14}. Specifically, we assume that only the right DC bus $DB_1$ is essential, and use our algorithms to check the feasibility of a controller that accommodates a failure in the right generator $G_1$, by rerouting power from the left generator to the right DC bus in a time interval which is less than or equal to $t_{\text{max}}=100$~ms. In addition, the controller must satisfy the following set of requirements, all captured by an STL contract.

\textbf{Assumptions.} \emph{When a contactor receives an open (close) signal, it shall become open (closed) in 80~ms or less}. Let the time discretization step $\Delta t = 20$~ms, $\tilde{c}_i$, $i\in\{0,\ldots, 4\}$ be a set of Boolean variables describing the controller signal (where $1$ stands for ``closed'' and $0$ for ``open"), $c_i$, $i\in\{0,\ldots, 4\}$ be a set of Boolean variables denoting the state (open/closed) of the contactors. We can capture the system assumptions via a conjunction of formulas of the form: $\G_{[0, \infty)}(\tilde{c}_i \rightarrow \F_{[0,4]}c_i)$, providing a model for the discrete-time binary-valued contactor states. The actual delay of each contactor can then be modeled using an integer (environment) variable $k_i$ for which we require: $\G_{[0, \infty)} (0 \leq k_i \leq 4)$.

\textbf{Guarantees.} \emph{If a generator becomes unavailable (fails), the controller shall disconnect it from the power network in 20 ms or less}. Let $g_0$ and $g_1$ be Boolean environment variables representing the state of the generators, where $1$ stands for ``available'' and $0$ for ``failure." We encode the above guarantees as $\G_{[0, \infty)}(g_i \rightarrow \F_{[0,1]}\tilde{c}_i).$ \emph{A DC bus shall never be disconnected from an AC generator for 100 ms or more,} i.e., $\G_{[0, \infty)}(\neg b_i \rightarrow \F_{[0,5]}b_i),$ where  $b_i$,  $i\in \{0,\ldots,3\}$ is a set of Boolean variables denoting the status of a bus, where $1$ stands for ``powered'' and $0$ for ``unpowered." Additional guarantees, which can also be expressed as STL formulas, include: (i) If both AC generators are available, the left AC generator shall power the left AC bus, and the right AC generator shall power the right AC bus. $C_3$ and $C_4$ shall be closed. (ii)  If one generator becomes unavailable, all buses shall be connected to the other generator. (iii)  Two generators must never be directly connected.

We apply the diagnosis and repair procedure in Section~\ref{sec:adv} to investigate whether there exists a control strategy that can satisfy the specification above over all possible values of contactor delays. As shown in Fig.~\ref{fig:EPS}, the controller is unrealizable; 
a trace of contactor delays equal to $4$ at all times provides a counterexample, which leaves $DB_1$ unpowered for $160$~ms, exceeding the maximum allowed delay of $100$-ms. In fact, the controller cannot close $C_2$ until $C_1$ is tested as being open, to ensure that $G_1$ is safely isolated from $G_2$. 
To guarantee realizability, Algorithm~\ref{algo:DiagnosisRepairAdversarial} suggests to either modify our assumptions to $\G_{[0, \infty)} (0 \leq k_i \leq 2)$ for $i\in\{0,\ldots,4\}$ or relax the guarantee on $DB_1$ to $G_{[0, \infty)}(\neg b_3 \rightarrow \F_{[0,8]} b_3)$.  
The overall execution time was 326~s, which includes formulating and executing three CEGIS loops, requiring a total of 6 optimization problems. 


%
%
%
%

\section{Conclusion}
\label{sec:conclusions}
We presented a set of algorithms for diagnosis and repair of STL specifications in the setting of controller synthesis for hybrid systems using a model predictive control scheme. 
Given an unrealizable specification, our algorithms can detect possible reasons for infeasibility and suggest repairs to make it realizable.
We showed the effectiveness of our approach on the synthesis of controllers for several applications. 
As future work, we plan to investigate techniques that better leverage the structure of the STL formulas and extend to a broader range of environment assumptions in the adversarial setting.  

\section{Acknowledgments}
This work was partially supported by IBM and United Technologies Corporation (UTC) via the iCyPhy consortium, and by TerraSwarm, one of six centers of STARnet, a Semiconductor Research Corporation program sponsored by MARCO and DARPA.

\bibliographystyle{abbrv}
\bibliography{hsccrefs}

\begin{thebibliography}{10}

\bibitem{Gurobi}
Gurobi {O}ptimizer.
\newblock [Online]: \url{http://www.gurobi.com/}.

\bibitem{AMT-fmcad13}
R.~Alur, S.~Moarref, and U.~Topcu.
\newblock Counter-strategy guided refinement of {GR(1)} temporal logic
  specifications.
\newblock In {\em Formal Methods in Computer-Aided Design}, 2013.

\bibitem{BemporadM99}
A.~Bemporad and M.~Morari.
\newblock Control of systems integrating logic, dynamics, and constraints.
\newblock {\em Automatica}, 35, 1999.

\bibitem{bemporad1999robust}
A.~Bemporad and M.~Morari.
\newblock Robust model predictive control: A survey.
\newblock In {\em Robustness in identification and control}, pages 207--226.
  Springer, 1999.

\bibitem{Chinneck1991}
J.~W. Chinneck and E.~W. Dravnieks.
\newblock Locating minimal infeasible constraint sets in linear programs.
\newblock {\em ORSA Journal on Computing}, 3(2):157--168, 1991.

\bibitem{donze2013efficient}
A.~Donz{\'e}, T.~Ferr{\`{e}}re, and O.~Maler.
\newblock Efficient robust monitoring for {STL}.
\newblock In {\em Computer Aided Verification}, 2013.

\bibitem{DonzeM10}
A.~Donz{\'e} and O.~Maler.
\newblock Robust satisfaction of temporal logic over real-valued signals.
\newblock In {\em FORMATS}, 2010.

\bibitem{donze2012temporal}
A.~Donz{\'e}, O.~Maler, E.~Bartocci, D.~Nickovic, R.~Grosu, and S.~Smolka.
\newblock On temporal logic and signal processing.
\newblock In {\em Automated Technology for Verification and Analysis}. 2012.

\bibitem{traceDiagnostics}
T.~Ferr{\`{e}}re, O.~Maler, and D.~Nickovic.
\newblock Trace diagnostics using temporal implicants.
\newblock In {\em Proc. Int. Symp. Automated Technology for Verification and
  Analysis}, 2015.

\bibitem{garcia1989}
C.~E. Garcia, D.~M. Prett, and M.~Morari.
\newblock Model predictive control: theory and practice--a survey.
\newblock {\em Automatica}, 25, 1989.

\bibitem{kerrigan2000soft}
E.~C. Kerrigan and J.~M. Maciejowski.
\newblock Soft constraints and exact penalty functions in model predictive
  control.
\newblock In {\em Control 2000 Conference, Cambridge}, 2000.

\bibitem{mining}
W.~Li, L.~Dworkin, and S.~A. Seshia.
\newblock Mining assumptions for synthesis.
\newblock In {\em ACM/IEEE Int. Conf. Formal Methods and Models for Codesign},
  2011.

\bibitem{li2014synthesis}
W.~Li, D.~Sadigh, S.~S. Sastry, and S.~A. Seshia.
\newblock Synthesis for human-in-the-loop control systems.
\newblock In {\em TACAS}. 2014.

\bibitem{Maler2004}
O.~Maler and D.~Nickovic.
\newblock Monitoring temporal properties of continuous signals.
\newblock In {\em Formal Techniques, Modelling and Analysis of Timed and
  Fault-Tolerant Systems}. 2004.

\bibitem{morari1993model}
M.~Morari, C.~Garcia, J.~Lee, and D.~Prett.
\newblock {\em Model predictive control}.
\newblock Prentice Hall Englewood Cliffs, NJ, 1993.

\bibitem{nuzzo-fmcad10}
P.~Nuzzo, A.~Puggelli, S.~A. Seshia, and A.~L. Sangiovanni-Vincentelli.
\newblock {CalCS}: {SMT} solving for non-linear convex constraints.
\newblock In {\em IEEE Int. Conf. Formal Methods in Computer-Aided Design},
  2010.

\bibitem{Nuzzo15b}
P.~Nuzzo, A.~Sangiovanni-Vincentelli, D.~Bresolin, L.~Geretti, and T.~Villa.
\newblock A platform-based design methodology with contracts and related tools
  for the design of cyber-physical systems.
\newblock {\em Proc. IEEE}, 103(11), Nov. 2015.

\bibitem{Nuzzo14}
P.~Nuzzo, H.~Xu, N.~Ozay, J.~Finn, A.~Sangiovanni-Vincentelli, R.~Murray,
  A.~Donz{\'e}, and S.~Seshia.
\newblock A contract-based methodology for aircraft electric power system
  design.
\newblock {\em IEEE Access}, 2:1--25, 2014.

\bibitem{Raman15}
V.~Raman, A.~Donz{\'e}, D.~Sadigh, R.~M. Murray, and S.~A. Seshia.
\newblock Reactive synthesis from signal temporal logic specifications.
\newblock In {\em Proc. Int. Conf. Hybrid Systems: Computation and Control},
  2015.

\bibitem{RamanKG13}
V.~Raman and H.~Kress-Gazit.
\newblock Explaining impossible high-level robot behaviors.
\newblock {\em IEEE Trans. Robotics}, 29, 2013.

\bibitem{Raman14}
V.~Raman, M.~Maasoumy, A.~Donz{\'e}, R.~M. Murray, A.~Sangiovanni-Vincentelli,
  and S.~A. Seshia.
\newblock Model predictive control with signal temporal logic specifications.
\newblock In {\em IEEE Conf. on Decision and Control}, 2014.

\bibitem{Schuppan09}
V.~Schuppan.
\newblock Towards a notion of unsatisfiable cores for {LTL}.
\newblock In {\em Fundamentals of Software Engineering}, 2009.

\bibitem{scokaert1999feasibility}
P.~O. Scokaert and J.~B. Rawlings.
\newblock Feasibility issues in linear model predictive control.
\newblock {\em AIChE Journal}, 45(8):1649--1659, 1999.

\end{thebibliography}

\end{document}